\newcommand{\beq}{\begin{equation}}
\newcommand{\eeq}{\end{equation}}
\newcommand{\beqn}{\begin{eqnarray}}
\newcommand{\eeqn}{\end{eqnarray}}
\newtheorem{lemma}{Lemma}
\newtheorem{corollary}{Corollary}
\newenvironment{proof}[1][Proof:]{\begin{trivlist}
\item[\hskip \labelsep {\bfseries #1}]}{\end{trivlist}}
\newcommand{\qed}{\nobreak \ifvmode \relax \else
      \ifdim\lastskip<1.5em \hskip-\lastskip
      \hskip1.5em plus0em minus0.5em \fi \nobreak
      \vrule height0.75em width0.5em depth0.25em\fi}
\begin{document}
\title{{\color{black} Mobility-Aware Modeling and Analysis of Dense Cellular Networks with C-plane/U-plane Split Architecture  
 }}

\author{Hazem Ibrahim, Hesham ElSawy, Uyen T. Nguyen, and Mohamed-Slim Alouini
\IEEEcompsocitemizethanks{\IEEEcompsocthanksitem Hazem Ibrahim and Uyen T. Nguyen are with the Department
of Electrical Engineering and Computer Science, York University, Toronto, Ontario, M3J 1P3 Canada (e-mail:{hibrahim,utn}@cse.yorku.ca).
\IEEEcompsocthanksitem Hesham ElSawy and Mohamed-Slim Alouini are with King Abdullah University of Science and Technology
(KAUST), Al-Khawarizmi Applied Math Building, Thuwal 23955-6900, Makkah Province, Kingdom of Saudi Arabia (e-mail: {hesham.elsawy,slim.alouini}@kaust.edu.sa).}
}




\maketitle

\begin{abstract}

The unrelenting increase in the population of mobile users  and their traffic demands drive cellular network operators to densify their network infrastructure.  Network densification shrinks the footprint of base stations (BSs) and reduces the number of users associated with each BS, leading to an improved spatial frequency reuse and spectral efficiency, and thus, higher network capacity. However, the densification gain come at the expense of higher handover rates and network control overhead. Hence, users mobility can diminish or even nullifies the foreseen densification gain. In this context, splitting the control plane (C-plane) and user plane (U-plane) is proposed as a potential solution to harvest densification gain with reduced cost in terms of handover rate and network control overhead.  In this article, we use stochastic geometry to develop a tractable mobility-aware model for a two-tier downlink cellular network with ultra-dense small cells and C-plane/U-plane split architecture. The developed model is then used to quantify the effect of mobility on the foreseen densification gain with and without C-plane/U-plane split. To this end, we shed light on the handover problem in dense cellular environments, show scenarios where the network fails to support certain mobility profiles, and obtain network design insights.
\end{abstract}

\begin{IEEEkeywords}
5G cellular networks, C-plane/U-plane split, lean carrier, network densification, phantom cells, handover, X2 interface handover, stochastic geometry.
\end{IEEEkeywords}

%

\section{Introduction}\label{sec:introduction_report}

\IEEEPARstart{T}{he} fifth generation (5G) of cellular networks is challenged to enhance  users' experience, support new services, and satisfy the ever-increasing mobile user population and their traffic demands. Compared to the state-of-the-art 4G cellular systems,  5G networks are expected to achieve thousandfold capacity improvement with at least hundredfold increase in the peak data rate {\color{black}and one order of magnitude delay reduction}~\cite{Andrews_5G}. Researchers in both academia and industry almost agree that network densification, via base station deployment, is among the key solutions to achieve this ambitious performance goal \cite{Andrews_5G}.  Therefore, it is expected that cellular network operators will significantly densify their networks infrastructures to fulfill the 5G performance requirements. In this case, network densification via deployments of small base stations (SBSs) is preferred over deployments of macro base stations (MBSs) due to lower cost and faster deployment.

Deploying more SBSs within the same geographical region reduces the footprint of each BS, and thus, decreases the number of users served by each BS.  Reduced BS footprints shorten user-to-serving-BS distances and improve the spatial frequency reuse.   Therefore, network densification is foreseen to improve spatial spectral efficiency and thus network capacity. However, narrowing BS footprints leads to higher handover rates and control overhead per unit area.  The increased handover rate imposes a major challenge that may negate the foreseen densification gain if conventional network operation is preserved. In extreme cases, where high mobility exists in urban areas (e.g., monorails in city downtowns or the Shinkansen network of high-speed railway in Tokyo), a densely deployed cellular network may fail to support very fast moving users due to excessive handover rates. Particularly, the network cannot support mobile users with a cell dwell time that is comparable or less than the  handover delay. Consequently, the undesirable effect of narrowing the BSs footprints requires solutions that reduce handover rate and  control overhead in order to harvest the foreseen network densification gain.

Decoupling control plane (C-plane) and user plane (U-plane) for cellular networks, under a cloud radio access network (C-RAN) umbrella,  is proposed as a potential solution to reduce handover rate and control burden \cite{split2012phantom}. Cellular network architecture with C-plane/U-plane (CP/UP) split  is also referred to as ``{\em Lean Carrier}'' for LTE \cite{hoymann2013lean}. Fig.~\ref{split_and_no_split} illustrates {\color{black} cellular network architecture with CP/UP split}. In this architecture, user devices can receive data packets from a nearby SBS while being controlled via  a farther MBS.  {\color{black}It is shown in \cite{hoymann2013lean, split2012phantom} that implementing the control plane at the macro cell level and the data plane at the small cell level incurs less control overhead compared to the conventional architecture (i.e., both C-plane and U-plane are jointly served from each BS). The CP/UP split architecture imposes less control overhead because the cell specific control signals/channels for SBSs, which identify each SBS, are not broadcast.\footnote{Examples of cell specific control signals/channels are primary/secondary synchronization signals [PSS/SSS], cell-specific reference signals [CRS], master information blocks [MIB] and system information blocks [SIB] (see \cite{split2012phantom, hoymann2013lean} for details).} Consequently, the SBSs become transparent to the users and the MBSs take charge of managing the radio resource control (RRC) procedures between mobile devices and  SBSs, such as session establishment and release. In the CP/UP split network, the SBSs are referred to as {\em phantom BSs}  because their identities are hidden from the users.\footnote{The abbreviation SBS in this article refers to both a small BS and a phantom BS.}


\begin{figure}[t]
\begin{center}
\scalebox{0.38}[0.38]{\includegraphics{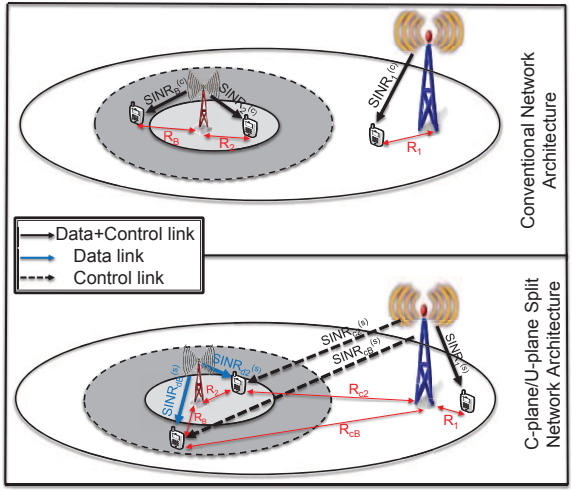}}
\end{center}
\caption{Conventional vs. CP/UP split network architecture: There are three types of links in the conventional network with three corresponding SINR values: ${\rm SINR}^{(C)}_1$ for macrocell users, ${\rm SINR}^{(C)}_2$ for non-biased users, and ${\rm SINR}^{(C)}_B$ for biased users. There are five types of links in the CP/UP split network with five corresponding SINR values: ${\rm SINR}^{(s)}_1$ for macrocell users, ${\rm SINR}^{(s)}_{d2}$ for non-biased users' data, ${\rm SINR}^{(s)}_{c2}$ for non-biased users' control, ${\rm SINR}^{(s)}_{dB}$ for biased users' data and ${\rm SINR}^{(s)}_{cB}$ for biased users' control and service distances.}
\vspace{-.2cm}
 \label{split_and_no_split}
\end{figure}

In addition to reducing the control overhead, the CP/UP split architecture can also be exploited to mitigate handover delays in dense cellular environments. Since the MBSs are in charge of the control signaling for the phantom cells including SBS selection, the MBSs can act as handover anchors and mange the handovers between underlying SBSs. In this case, the core network is only informed about inter-MBSs handovers. Compared to the conventional network architecture which informs the core network about MBSs and SBSs handovers. Hence, the CP/UP split architecture can significantly reduce handover delay by only reporting the less frequent inter-MBSs handovers, thanks to the larger coverage of macro cells. It is ought to be mentioned that the relative performance between the conventional  and CP/UP split architectures highly depends on the availability of the direct X2 interfaces between the BSs. This is because the X2 interface also enables core network transparent handover procedure. However, the X2 interface does not provide signaling overhead reduction as in the CP/UP split case.

In this article, we use stochastic geometry to develop a tractable mobility-aware model that characterizes the performance of cellular networks with and without CP/UP split. In particular, we model downlink transmission in two-tier cellular networks with flexible cell association, in which the model takes into account the impact of the handover rate and control overhead on users throughput. Tractable expressions for per-user throughput in terms of the BSs intensity, users velocity, and handover delay are obtained to study the effect of mobility on throughput in dense cellular environments, in which the performances of conventional and CP/UP split architectures are compared. \textcolor{black}{To this end, we shed light on the handover delay problem in dense cellular environments and show the potential delay mitigation via the CP/UP split architecture. The developed model is also used to quantify the expected performance gain for the CP/UP split architecture, obtain design insights, and discuss the performance limits of the conventional and CP/UP split architectures. To the best of our knowledge, this article is the first to develop a theoretical and tractable mobility-aware modeling paradigm to study the handover problem in dense cellular environments and evaluate the performance of the CP/UP split network architecture. Based on the developed model, potential scenarios where  CP/UP split is essential to support user mobility are highlighted and the feasibility of CP/UP split is also discussed}. 


The remainder of the article is organized as follows. In Section \ref{related_work}, we provide an overview of the related work. In Section \ref{System_Model_and_Assumption}, we provide the system model and assumptions. Section~\ref{pectrum_Allocation_Control_Burden} presents the conventional and CP/UP split transmission rate models. Section \ref{Performance_Analysis} characterizes the coverage probability and spectral efficiencies of the conventional and CP/UP split architectures. Section \ref{mobility_analysis} presents mobility analysis and evaluate the handover costs incurred by mobile users.  We validate the proposed model and discuss numerical results in Section \ref{validation_and_result}. Finally, Section \ref{conc_future} concludes the paper and outlines our future work.

\section{Related Work}\label{related_work}



Since modern cellular networks exhibit random topologies rather than idealized grids, stochastic geometry is widely accepted as a tool to model cellular networks \cite{elsawy2013survey}. The past five years have witnessed a plethora of stochastic geometry based models that tackles different aspects in cellular networking \cite{elsawy2013survey, Sarabjot2013partition,jo2011outage,jo2012heterogeneous, elsawy2014uplink,dhillon2012uplink, mimo2013Di_Renzo, modeling_MIMO_ASE_harpreet, cao2012optimal, mukherjee2013energy, zzz, Hazem2015}. However,  the majority of these models do not account for mobility and focus on stationary users performance. For instance, coverage probability and rate performance are characterized for single-antenna downlink connections in \cite{Sarabjot2013partition,jo2011outage,jo2012heterogeneous}, for single-antenna uplink connections in \cite{elsawy2014uplink,dhillon2012uplink}, and for downlink connections with multiple antennas in \cite{mimo2013Di_Renzo, modeling_MIMO_ASE_harpreet}.  
Stochastic geometry also helps characterizing the performance of CP/UP split architecture in cellular networks. For instance, the energy efficiency gains provided by the CP/UP split architecture are characterized by Zakrzewska et al. \cite{mukherjee2013energy}. The effect of vertical offloading and BS sleeping on the energy efficiency for CU/UP split architecture is studied by
Zhang et al. \cite{zzz}. In our pervious work \cite{Hazem2015}, the throughput of the CU/UP split cellular architecture is studied.  However, none of the aforementioned studies incorporates the effect of mobility and handover into the analysis.

Surprisingly, few models can be found in the literature that exploit stochastic geometry to characterize mobility in cellular networks. The handover rate in cellular networks is first characterized by Lin et al. \cite{lin2013towards}, in which expressions for the handover rate are derived for random waypoint mobility model in a single-tier cellular network. The handover rate for multi-tier cellular networks is characterized by  Bao and Liang \cite{bao2015stochastic} for arbitrary mobility model. However, neither \cite{lin2013towards} nor \cite{bao2015stochastic} investigates the effects of handover on important performance metrics such as coverage, rate, or delay. The handover effects on coverage and rate  are  investigated by Sadr and Adve \cite{sadr2015handoff} for random way point mobility model. The authors derive the probability of handover and use the coverage probability for stationary users multiplied by a handover cost factor to infer the coverage probability for users experiencing handovers. Note that the handover cost factor in \cite{sadr2015handoff} is considered as a network parameter that reflects the SINR degradation during handovers.   Zhang et al. \cite{zhangdelay} investigate the effect of delay-reliability tradeoff in dense cellular networks for static and high mobility users under a time slotted transmission scheme. The authors show that high mobility users outperform  static users because mobile users experience uncorrelated SINRs across different time slots. However, the results in \cite{zhangdelay} may be misleading because the model only captures the positive impact of mobility and overlooks the performance degradation that may occur due to handover signaling and delay. Finally, Ge et al. \cite{ge2015user} develop a social-activity aware mobility model, denoted as the individual mobility model, to represent the users clustering behavior in a two-tier cellular network. Assuming a single social community, located at the origin, which is covered by densely deployed SBSs, the coverage probability inside and outside the social community as well as the probabilities to arrive, depart, and stay in the social community are derived. However, the analysis in \cite{ge2015user} is only valid for finite networks where the social community inhabits a non-negligible portion of the total network and overlooks the effect of handovers. It is worth mentioning that, similar to \cite{sadr2015handoff}, the authors of \cite{zhangdelay} and \cite{ge2015user} use the stationary SINR analysis to infer the coverage probability of moving users.

Different from the existing literature, our proposed mobility-aware paradigm captures the handover effect on the users throughput in conventional and CP/UP network architectures. Different from \cite{sadr2015handoff}, the handover cost is not assumed and is rigorously derived from the system model. Also, different from \cite{zhangdelay} and \cite{ge2015user}, the developed model accounts for the handover effect and is not tailored to a specific mobility model.  Furthermore, the developed model accounts for signaling overhead, flexible user association scheme via association biasing, the availability of X2 interface between BSs, and  almost blank subframes (ABS) coordination between MBSs and SBSs.

\section{System Model and Assumptions}
\label{System_Model_and_Assumption}
In this section, we describe the network and mobility models and assumptions.
\subsection{Network Model}
\normalsize
We consider a two-tier downlink cellular network with BSs in each tier modeled via an independent two dimensional homogeneous Poisson point process (PPP) $\mathbf{\Phi}_k$ of density $\lambda_k$, where $k \in \{1,2\}$. The macro cell tier and small cell (phantom cell) tier are denoted by $k=1$ and $k=2$,  respectively.  Mobile users are spatially distributed according to an independent PPP $\mathbf{\Phi}_u$ with density $\lambda^{(u)}$. All BSs in the $k^{th}$ tier are equipped with single antennas, transmit with the same power $P_k$, and always have packets to transmit.  We consider a general power law path loss model, with  path loss exponent $\alpha_k$, for both desired and interference downlink signal powers. Furthermore, signal attenuation due to multi-path fading is modeled using an independent Rayleigh distribution such that the channel power gain $H_{x}\sim \exp(1)$. {\color{black}A list of the key mathematical notations used in this paper is given in Table \ref{Notation_Summary}}.

Due to the transmission power disparity between the two tiers, the BSs footprints are represented by a weighted Poisson Voronoi diagram \cite{ash1986generalized} as depicted in Fig. \ref{weighted_Voronoi}. To enable flexible cell association and fine-grained control of  BS loads, we follow the model in \cite{jo2012heterogeneous} and introduce the bias factor $B$ to artificially encourages/discourages users to associate with the small cell tier.

\begin{table}
\centering
\caption{{\color{black} Mathematical Notations} }
\resizebox{0.45 \textwidth}{!}{\begin{tabular}{|c|c|}
\hline
  \textbf{Notation}&\textbf{Description}\\ \hline
  $\mathbf{\Phi}_k$;$\mathbf{\Phi}_u$& PPP of BSs of $k^{th}$ tier; PPP of mobile users. \\ \hline
   $\lambda_k$;$\lambda^{(u)}$& Density of BSs of $k^{th}$ tier; density of mobile users.\\ \hline
   $P_k$&\shortstack{Transmit power of BSs of $k^{th}$ tier.}\\ \hline
  B&\shortstack{Association bias for $2^{nd}$ tier.}\\ \hline
  $\mathbf{\alpha}_k$&\shortstack{Path loss exponent of $k^{th}$ tier.}\\ \hline
  $\mathcal{V}$&\shortstack{\footnotesize Mobile user velocity.}\\ \hline
  \\ [-1.5em]
  $HO_{ij}^{(c)}$&\shortstack{ Mean number of handovers per unit length from \\tier $i$ to $j$, for conventional network.}\\ \hline
    \\ [-1.5em]
  $MHO^{(s)}$&\shortstack{ Mean number of inter-anchor handovers per\\ unit length for CP/UP split network.}\\ \hline
    \\ [-1.5em]
  $VHO^{(s)}$&\shortstack{ Mean number of intra-anchor handovers per\\  unit length for CP/UP split network.}\\ \hline
    \\ [-1.5em]
  $\mathbf{\eta}$&\shortstack{ Fraction of time dedicated to serve biased mobile\\ users with no interference from the macro tier.}\\ \hline
    \\ [-1.5em]
  $\mathbf{\mu_C}$&\shortstack{ Control data overhead fraction in \\overall network capacity. } \\ \hline
  $\theta$&\shortstack{ Predefined threshold for correct signal reception.} \\ \hline
  $D^{(c)}_{HO}$ &\shortstack{ Handover cost in conventional network.} \\ \hline
   $D^{(s)}_{HO}$ &\shortstack{ Handover cost in CP/UP split network.} \\ \hline
     \\ [-1.5em]
   $\mathcal{X};\mathcal{Z}$ &\shortstack{ Probability of having X2 interface in conventional;\\ and CP/UP split architecture handovers.} \\ \hline
     \\ [-1.5em]
  $d^{(c)};\tilde{d}^{(c)}$&\shortstack{Delay per non X2 handover;delay\\per X2 handover in conventional network}. \\ \hline
    \\ [-1.5em]
  $d^{(s)}_{m};\tilde{d}_m^{(s)}$& \shortstack{ Inter-anchor handover delay without X2 interface;Inter-anchor \\ handover delay with X2 handover in CP/UP split network.}\\ \hline
  $d^{(s)}_{v}$& \shortstack{ Intra-anchor handover delay for CP/UP split network.} \\ \hline
    \\ [-1.5em]
  $\mathcal{\Large\emph{u}}_j$&$\shortstack{ Macro cell users \emph{j} = 1, small cell users\\ \emph{j} =2, biased small cell users \emph{j} = B}$.\\ \hline
  $\gamma$ &\shortstack{ Control signaling reduction factor}\\ \hline
  $AT^{(c)}$&\shortstack{ Average per-user throughput in the conventional network.} \\ \hline
  $AT^{(s)}$&\shortstack{ Average per-user throughput in CP/UP split network.} \\ \hline
  $\mathcal{A}_j$& Association probability of a typical user $\mathcal{\Large\emph{u}}_j$ .\\ \hline
    \\ [-1.5em]
  $\mathcal{T}^{(c)}_j$&\shortstack{ BS throughput in each association \\ state category for conventional network.} \\ \hline
    \\ [-1.5em]
  $\mathcal{T}^{(s)}_j$&\shortstack{ BS throughput in each association \\ state category for CP/UP split network.} \\ \hline
  $\mathcal{SE}^{(c)}$&\shortstack{ Spectral efficiency for conventional network.}\\ \hline
  $\mathcal{SE}^{(s)}$&\shortstack{ Spectral efficiency for CP/UP split network.}\\ \hline
  \footnotesize$P_{12}$;$P_{21}$&\shortstack{ $P_{12}=\frac{P_{1}}{P_{2}}$; $P_{21}=\frac{1}{P_{12}}$.}\\ \hline
  $\tilde{P}_{12}$;$\tilde{P}_{21}$&{ $\tilde{P}_{12}=\frac{P_{1}}{BP_{2}}$; $\tilde{P}_{21}=\frac{1}{\tilde{P}_{12}}$.}\\ \hline
  $\rho(a,b)$&\shortstack{ $\rho(a,b)=a+\sqrt{b}\arctan{(\sqrt{b})}$.}\\ \hline
  $\tilde{\lambda}_{k}$&\shortstack{ $\tilde{\lambda}_{k} = \frac{2 \pi \lambda_{k}}{\alpha_{k}-2}$.}\\ \hline
  $ \text{ }_{2}F_{1}(\cdot,\cdot;\cdot;\cdot)$&\shortstack{ The hypergeometric function.}\\ \hline
\end{tabular}}
\label{Notation_Summary}
\end{table}
\normalsize

Let $r_{k}$ denote the distance between an arbitrary mobile user and the nearest BS in the $k^{th}$ tier, then the biased association rule assigns a mobile user to the macro tier if $P_{1}r_{1}^{-\alpha_1} >  P_{2}B r_{2}^{-\alpha_2}$, and to a small (phantom) cell otherwise.  {\color{black}Based on the aforementioned association criterion and following the notation in \cite{Sarabjot2013partition},}  the complete set of users is divided into the following three non-overlapping sets:

\begin{equation}\label{sets}
  \hspace{-.5 cm} u \in \begin{cases}
    \mathcal{\Large\emph{u}}_1& \mbox{if } P_{1}r_{1}^{-\alpha_1} \geq P_{2}Br_{2}^{-\alpha_2}\\
    \mathcal{\Large\emph{u}}_{2}&\mbox{if } P_{2}r_{2}^{-\alpha_2} > P_{1}r_{1}^{-\alpha_1}\\
    \mathcal{\Large\emph{u}}_B&\mbox{if } P_{2}r_{2}^{-\alpha_2} \leq P_{1}r_{1}^{-\alpha_1} < P_{2}Br_{2}^{-\alpha_2}
  \end{cases}
\end{equation}
where $\mathcal{\Large\emph{u}}_1$ denotes the set of macrocell users, $\mathcal{\Large\emph{u}}_{2} $ denotes the set of non-biased small cell users, and $\mathcal{\Large\emph{u}}_B$ is the set of biased small cell users, where $\mathcal{\Large\emph{u}}_1\cup \mathcal{\Large\emph{u}}_{2} \cup \mathcal{\Large\emph{u}}_B= \mathbf{\Phi}_u$ and  $\mathcal{\Large\emph{u}}_1\cap \mathcal{\Large\emph{u}}_{2} \cap \mathcal{\Large\emph{u}}_B= \phi$.

\begin{figure}[!t]
\begin{center}
\scalebox{0.36}[0.36]{\includegraphics{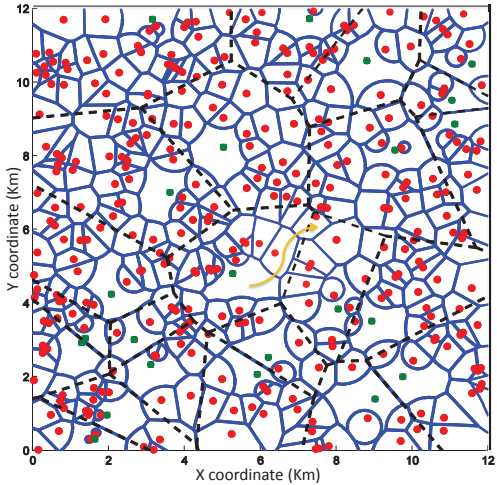}}
\end{center}
\caption{Two-tier weighted Possion Voronoi diagram representing a cellular network. The green squares and the red circles represent macro BSs and small BSs, respectively. The figure shows a user's trajectory (highlighted in orange), intra-anchor handover boundaries (in blue) and inter-anchor handover boundaries (in dotted black) for the CP/UP split architecture.}
\vspace{-.2cm}
\label{weighted_Voronoi}
\end{figure}

We consider two modes of operation, namely the conventional and CP/UP split, as shown in Fig.~\ref{split_and_no_split}. In the conventional network architecture, we assume that the control overhead consumes $\mu_c$ of the data rate and that each user gets the control and data from the same BS. We also assume universal frequency reuse scheme with almost blank sub-frames (ABS) interference management between macro cells and biased small cells \cite{Sarabjot2013partition}.\footnote{Universal frequency reuse is considered for the conventional network architecture because it always results in higher user throughput than dedicated spectrum access as shown in \cite{Hazem2015}.}  That is, a fraction $\eta$ of  time is dedicated to serving biased mobile users (i.e., $\mathcal{\Large\emph{u}}_B$) with no interference from the macro tier (i.e., {\color{black}MBSs do not send or send with very low power during the ABSs interval}).  In the CP/UP split network architecture, each small cell user (i.e., each user in  $\mathcal{\Large\emph{u}}_2$ and   $\mathcal{\Large\emph{u}}_{B}$) has double association in which the SBS transmit data only and the control signaling overhead is communicated via the MBS. Note that the control overhead for small cell users in the CP/UP split case consumes $\mu_c/\gamma$ of the data rate, where $\gamma \geq 1$ is offered control reduction factor~\cite{hoymann2013lean, split2012phantom}. It is worth noting that the decoupled, but simultaneous,  data and control association of the CP/UP split architecture necessitates a dedicated spectrum assignment for each tier. To conduct a fair comparison, we assume both the conventional and CP/UP split architectures have the same available spectrum of $W$, however, the CP/UP architecture split $W$ into  $W_1$ and $W_2 = W-W_1$ for the macro and small-cell tiers, respectively.

\subsection{User Mobility}
\label{Spectrum_Allocation_Control_Burden}

We assume that each user moves with an arbitrary trajectory and velocity, in which a handover occurs when a user crosses over a cell boundary. {However, we assume that the overall users mobility model preserves the spatial uniformity of users across the network.} We define a vertical handover as one made between two BSs in two different tiers, and a horizontal handover as one made between two BSs in the same tier.

Fig.~\ref{weighted_Voronoi} shows the handover boundaries for the conventional and CP/UP split network architectures.  In the CP/UP split network architecture, {the black dotted Voronoi tessellation represents control handover boundaries and the blue weighted Voronoi tessellation represents the data handover boundaries. In the conventional network architecture, the blue weighted Voronoi tessellation represents both the data and control handover boundaries.

In the conventional network architecture, users change their association (i.e., control and data) upon each handover. All handovers are managed through mobility management entity (MME) in the core network if direct X2 interface is not available between the serving and target BSs. Otherwise, the handover signalling is performed via the X2 interface without involving the core network, which highly reduces the handover delay. The handovers that occurs in the conventional network architecture can be categorized into the following cases: (1) vertical handover from a MBS to a SBS, (2) vertical handover from a SBS to a MBS, (3)  horizontal handover between two MBSs, and (4) horizontal handover  between two SBSs. In the conventional network architecture, the mean number of handovers, from tier $i$ to tier $j$, that occurs per unit length of a user trajectory is denoted by $HO_{ij}^{(c)}$, where $i,j \in\{1,2\}$ and the superscript $(c)$ denotes the conventional network architecture.


In the CP/UP split network architecture, the MBSs function as mobility anchors for data handovers within macro-to-macro Voronoi tessellation (black dotted tessellation in Fig.~\ref{weighted_Voronoi}). That is, the weighted Voronoi tessellation constructed w.r.t. all BSs in all tiers determines the data plane association and the Voronoi tessellation constructed w.r.t. MBSs only determines the control signaling and handover support association as shown in Fig.~\ref{weighted_Voronoi}. Consequently, only two types of handover occurs in the CP/UP split architecture, namely, (1) intra-anchor handover, and (2) inter-anchor handover. 
An inter-anchor handover occurs when a user crosses the boundary between two MBSs, and the handover is managed via the MME in the core network when there is no X2 interface between the engaged MBSs. In contrast, an intra-anchor handover is always transparent to the MME and is managed via the anchor BS, which reduces the handover delay because the MME is not notified.  In the CP/UP split network architecture, we denote the mean number of inter-anchor and intra-anchor handovers per unit length of the user trajectory as $MHO^{(s)}$ and $VHO^{(s)}$, respectively, where the superscript (s) denotes the CP/UP split network. It is worth noting that users change their control association without changing their data association when crossing over a macro-boundary within the coverage of a SBS. This type of handover is treated as an inter-anchor handover because the MME is informed. 

For tractability, we assume that users' trajectories are long enough to go through all three association states $j \in {1,2, B}$.  We also use the spatially averaged signal-to-interference-plus-noise-ratio (SINR) for stationary users provided by a given tier to infer the average SINR experienced by a mobile device during the journey through that tier. {\color{black} This assumption is validated later in Section \ref{validation_and_result}}. In other words, we compute  ${\rm SINR}_j$ provided by tier $j$ for a randomly selected stationary user and assume that mobile users will experience an average  ${\rm SINR}_j$ during their trajectories  in the $j^{th}$ tier. 

It is worth noting that the average stationary SINR assumption is needed for model tractability and was used in \cite{sadr2015handoff, ge2015user, zhangdelay}. This assumption only ignores the spatial correlations between the SINR values along each trajectory. However, averages over all trajectories and all users under all network realization are still captured by the analysis.

\section{Conventional and CP/UP Split Transmission Rate Models}
\label{pectrum_Allocation_Control_Burden}



Using Shannon's formula to define the ergodic rate, the average throughput delivered by a MBS and SBS for non-biased and biased users in the conventional architecture can be expressed as follows:

\begin{align}
\mathcal{T}^{(c)}_1&=(1-\mu_{C})(1-\eta) W  \mathbb{E}[\ln(1+{\rm SINR}^{(c)}_1)]
\label{throughput_conventional_1}, \\
\mathcal{T}^{(c)}_2&=(1-\mu_{C})(1-\eta) W  \mathbb{E}[\ln(1+{\rm SINR}^{(c)}_2)]
\label{throughput_conventional_2}, \\
\mathcal{T}^{(c)}_B&=(1-\mu_{C})\eta W  \mathbb{E}[\ln(1+{\rm SINR}^{(c)}_B)].
\label{throughput_conventional_B}
\end{align}

\noindent Note that MBSs are active only for $1-\eta$ fraction of the time due to the ABS interference management. On the other hand, small BSs are active all the time in which $1-\eta$ fraction of the time is dedicated for non-biased users $\mathcal{\Large\emph{u}}_2$ and $\eta$ fraction of the time is dedicated for biased users $\mathcal{\Large\emph{u}}_B$.

The decoupled data and control associations and dedicated spectrum access eliminate the inter-tier interference in the CP/UP split operation and changes the statistical nature of the ${\rm SINR}$ in the CP/UP split network architecture when compared to the conventional network architecture. In particular, in the CP/UP split network architecture as shown in Fig. \ref{split_and_no_split}, we have five different SINRs to consider, namely, ${\rm SINR}^{(s)}_1$ for macrocell users, ${\rm SINR}^{(s)}_{d2}$ for non-biased users' data, ${\rm SINR}^{(s)}_{c2}$ for non-biased users' control signaling, ${\rm SINR}^{(s)}_{dB}$ for biased users' data and ${\rm SINR}^{(s)}_{cB}$ for biased users' control signaling.  Let $W_1$  be the spectrum assigned to the macro tier and $W_2 = W - W_1$ be the spectrum assigned to the phantom cell tier. In this case, the average throughput of a small (phantom) cell users is given by:

\begin{align}
 \mathcal{T}^{(s)}_2&=  (1-\eta) W_2 \mathbb{E}[\ln(1+{\rm SINR}^{(s)}_{d2})],  \label{throughput_virtualized_2} \\
  \mathcal{T}^{(s)}_B&= \eta W_2 \mathbb{E}[\ln(1+{\rm SINR}^{(s)}_{dB})].
 \label{throughput_virtualized_B}
 \end{align}

\noindent Although there is no cross-tier interference in the CP/UP split network architecture due to spectrum splitting,  time sharing still exists in (\ref{throughput_virtualized_2}) and (\ref{throughput_virtualized_B}) because the phantom BSs dedicate a fraction $\eta$  of time to serve biased users. Note that the control overhead $\mu_c$ does not appear in the above throughput expressions because all control overhead is offloaded to the macro cells. The average throughput delivered to the macrocell users, after reserving the resources for phantom cell control signaling, is characterized via the following lemma:

\begin{lemma}\label{rate_lemma}
Consider a two-tier cellular network with the CP/UP split architecture,  PPP macro BSs with density $\lambda_1$, PPP phantom BSs with density $\lambda_2$, and a control reduction factor $\gamma$. Then the average throughput delivered to the macrocell users after resources for control signaling for phantom cell users have been reserved is expressed as:

\small
\begin{align}\label{macro_rate_lemma}
\mathcal{T}^{(s)}_1 &= (1-\mu_{C}) \mathcal{R}^{(s)}_1  \left( 1- \frac{\lambda_2 \mu_{C}}{\lambda_1\gamma} \left( \frac{\mathcal{T}^{(s)}_2}{\mathcal{R}_{c2}}+ \frac{ \mathcal{T}^{(s)}_B}{\mathcal{R}_{cB}}\right) \right),
\end{align}
\normalsize

\noindent where $\mathcal{R}^{(s)}_1 = W_1 \mathbb{E}[\ln(1+{\rm SINR}^{(s)}_1)]$ is the ergodic rate for macrocell users, $\mathcal{R}_{c2}= W_1 \mathbb{E}[\ln(1+{\rm SINR}^{(s)}_{c2})]$ is the average rate at which the control data is delivered to non-biased phantom cell users, $\mathcal{R}_{cB}= W_1 \mathbb{E}[\ln(1+{\rm SINR}^{(s)}_{cB})]$ is the average rate at which the control data is delivered to biased phantom cell users, and $\mathcal{T}^{(s)}_2$ and  $\mathcal{T}^{(s)}_B$ are the throughputs of non-biased and biased phantom cell users given in  \eqref{throughput_virtualized_2} and  \eqref{throughput_virtualized_B}, respectively.
\end{lemma}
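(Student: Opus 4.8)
The plan is to carry out a resource-accounting argument at the level of a typical macro base station (MBS). The key observation is that, in the CP/UP split architecture, an MBS must devote part of its spectral resources not only to serving the data of its own macrocell users but also to relaying the control signaling of every phantom BS anchored to it. I would therefore begin by identifying $\mathcal{R}^{(s)}_1 = W_1 \mathbb{E}[\ln(1+{\rm SINR}^{(s)}_1)]$ as the total ergodic rate available at a typical MBS over its dedicated spectrum $W_1$, then subtract the fraction of that rate consumed by phantom-cell control, and finally apply the macrocell user's own control overhead.

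First I would quantify the control load generated by a single phantom BS. By the model assumption that the control overhead of a small-cell user consumes $\mu_C/\gamma$ of its data rate, a typical phantom BS serving non-biased (resp. biased) users at data throughput $\mathcal{T}^{(s)}_2$ (resp. $\mathcal{T}^{(s)}_B$), given by \eqref{throughput_virtualized_2} and \eqref{throughput_virtualized_B}, generates a control stream of rate $\frac{\mu_C}{\gamma}\mathcal{T}^{(s)}_2$ (resp. $\frac{\mu_C}{\gamma}\mathcal{T}^{(s)}_B$) that must be carried by the anchoring MBS. Since the MBS delivers control to non-biased (resp. biased) phantom-cell users at ergodic rate $\mathcal{R}_{c2}$ (resp. $\mathcal{R}_{cB}$), and assuming a linear time/resource-sharing discipline, the fraction of MBS resources consumed to serve the control of one such phantom BS is $\frac{\mu_C}{\gamma}\frac{\mathcal{T}^{(s)}_2}{\mathcal{R}_{c2}}$ (resp. $\frac{\mu_C}{\gamma}\frac{\mathcal{T}^{(s)}_B}{\mathcal{R}_{cB}}$).

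Next I would multiply this per-phantom-BS load by the expected number of phantom BSs anchored to a typical MBS. Because the anchor boundaries form the macro-only Voronoi tessellation, the phantom BSs anchored to a typical MBS are exactly the points of $\mathbf{\Phi}_2$ falling inside its macro Voronoi cell, whose expected number equals $\lambda_2$ times the mean cell area $1/\lambda_1$, i.e. $\lambda_2/\lambda_1$. Summing over the two user classes gives the total reserved fraction $\frac{\lambda_2 \mu_C}{\lambda_1 \gamma}\left(\frac{\mathcal{T}^{(s)}_2}{\mathcal{R}_{c2}}+\frac{\mathcal{T}^{(s)}_B}{\mathcal{R}_{cB}}\right)$, so that the residual fraction $1 - \frac{\lambda_2 \mu_C}{\lambda_1 \gamma}(\cdots)$ of the MBS resources remains available for macrocell traffic; multiplying by $\mathcal{R}^{(s)}_1$ and by $(1-\mu_C)$ to account for the macrocell users' own (unreduced) control overhead yields the claimed expression \eqref{macro_rate_lemma}.

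I expect the main subtlety to lie in two places. The first is the justification that the expected number of phantom BSs per macro cell is exactly $\lambda_2/\lambda_1$; this is a standard consequence of the PPP/weighted-Voronoi structure, namely the mean typical-cell area $1/\lambda_1$ combined with the independence of $\mathbf{\Phi}_1$ and $\mathbf{\Phi}_2$ and Campbell/Mecke averaging, so I would invoke it rather than re-derive cell-area distributions. The second, more conceptual point is the ``fraction of resources'' interpretation: converting a required control rate into a consumed resource fraction by dividing by the achievable control rate presupposes that delivered throughput scales linearly with the allocated time/bandwidth share, which is precisely what makes the accounting additive across phantom BSs and user classes. Making this linearity explicit, together with the averaging that lets per-BS mean throughputs stand in for the typical BS, is what turns the heuristic bookkeeping into the exact identity stated in the lemma.
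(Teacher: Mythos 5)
Your proposal is correct and follows essentially the same resource-accounting argument as the paper's proof: the paper counts bits over a time interval $t$ and divides the required control bits by the control delivery rates $\mathcal{R}_{c2}$, $\mathcal{R}_{cB}$ to find the MBS time consumed per phantom BS, scales by the average of $\lambda_2/\lambda_1$ phantom BSs per macro cell, and multiplies the residual by $\mathcal{R}^{(s)}_1$ and $(1-\mu_C)$ exactly as you do. Your rate-fraction formulation is just the time-normalized version of the same bookkeeping, and your explicit remarks on the $\lambda_2/\lambda_1$ averaging and the linearity of throughput in the allocated resource share make precise what the paper simply asserts.
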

\begin{proof}
 See Appendix~\ref{rate_lemma_proof}
\end{proof}

It is worth mentioning that \eqref{macro_rate_lemma} implicitly assumes that the control overhead is always a fraction $\mu_c$ of the available data rate and is only reduced by a factor of $\gamma$ for phantom cell users. Eq. \eqref{macro_rate_lemma} also assumes that the user population is sufficiently dense so that each phantom BS always has non-biased and biased small cell users to serve.

A CP/UP split network is said to be {\em feasible} if the MBSs have sufficient  bandwidth to serve macrocell users and to  provide control signaling to phantom cell users. From Lemma~\ref{rate_lemma}, the feasibility of the CP/UP split architecture is given in the following corollary

\begin{corollary} \label{col}
The CP/UP split architecture is feasible if and only if
\small
\begin{align}\label{condition_col}
  \frac{\mathcal{T}^{(s)}_2}{\mathcal{R}_{c2}}+ \frac{ \mathcal{T}^{(s)}_B}{\mathcal{R}_{cB}} & \leq \frac{\lambda_1\gamma}{ \lambda_2 \mu_{C}},
  \end{align}
  or equivalently
  \begin{align}
(1- \eta) \frac{\mathbb{E}[\ln(1+{\rm SINR}^{(s)}_{d2})]}{\mathbb{E}[\ln(1+{\rm SINR}^{(s)}_{c2})]}+ \eta \frac{\mathbb{E}[\ln(1+{\rm SINR}^{(s)}_{dB})]}{\mathbb{E}[\ln(1+{\rm SINR}^{(s)}_{cB})]} & \leq \frac{W_1 \lambda_1\gamma}{W_2 \lambda_2 \mu_{C}}.
\end{align}
\end{corollary}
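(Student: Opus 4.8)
The plan is to obtain the corollary as an immediate consequence of Lemma~\ref{rate_lemma} by interpreting \emph{feasibility} correctly. A CP/UP split network is feasible exactly when, after the MBSs reserve the bandwidth required to carry the offloaded control signaling for all phantom cell users, the rate left over to serve macrocell users is non-negative. That residual rate is precisely $\mathcal{T}^{(s)}_1$ as characterized in~\eqref{macro_rate_lemma}, so the entire statement reduces to determining when $\mathcal{T}^{(s)}_1 \geq 0$.

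First I would observe that the prefactor $(1-\mu_C)\mathcal{R}^{(s)}_1$ in~\eqref{macro_rate_lemma} is strictly positive: the control fraction satisfies $0 < \mu_C < 1$, and the macrocell ergodic rate $\mathcal{R}^{(s)}_1 = W_1\,\mathbb{E}[\ln(1+{\rm SINR}^{(s)}_1)]$ is positive whenever macro spectrum is allocated. Hence the sign of $\mathcal{T}^{(s)}_1$ is governed entirely by the bracketed term, and $\mathcal{T}^{(s)}_1 \geq 0$ holds if and only if
\[
1 - \frac{\lambda_2 \mu_C}{\lambda_1 \gamma}\left(\frac{\mathcal{T}^{(s)}_2}{\mathcal{R}_{c2}} + \frac{\mathcal{T}^{(s)}_B}{\mathcal{R}_{cB}}\right) \geq 0 .
\]
Since $\lambda_2\mu_C/(\lambda_1\gamma) > 0$, rearranging preserves the inequality direction and yields the first form~\eqref{condition_col}. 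This single step supplies both directions of the ``if and only if'': violating~\eqref{condition_col} forces $\mathcal{T}^{(s)}_1 < 0$, which is meaningless as a throughput and thus infeasible, while satisfying it guarantees a non-negative residual macro rate.

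To obtain the equivalent expectation-ratio form I would substitute the explicit definitions---$\mathcal{T}^{(s)}_2$ and $\mathcal{T}^{(s)}_B$ from~\eqref{throughput_virtualized_2} and~\eqref{throughput_virtualized_B}, together with $\mathcal{R}_{c2}=W_1\mathbb{E}[\ln(1+{\rm SINR}^{(s)}_{c2})]$ and $\mathcal{R}_{cB}=W_1\mathbb{E}[\ln(1+{\rm SINR}^{(s)}_{cB})]$---into the two ratios. Each ratio then carries a common factor $W_2/W_1$; collecting it and multiplying through converts $\lambda_1\gamma/(\lambda_2\mu_C)$ into $W_1\lambda_1\gamma/(W_2\lambda_2\mu_C)$, producing the second stated inequality. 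The manipulation is purely algebraic, so there is no analytical obstacle; the one point requiring care is the interpretive step, namely justifying that feasibility is \emph{exactly} $\mathcal{T}^{(s)}_1 \geq 0$ (with equality admissible, corresponding to the MBSs being fully loaded but not oversubscribed). I would anchor this in the resource-accounting derivation behind Lemma~\ref{rate_lemma}, where the bracketed term is the fraction of macro resources remaining after the control reservation, so its non-negativity is precisely the condition that the control demand does not exceed the macro capacity.
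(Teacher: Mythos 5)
Your proposal is correct and follows exactly the route the paper intends: the corollary is stated as an immediate consequence of Lemma~\ref{rate_lemma}, obtained by requiring the bracketed residual-resource factor in~\eqref{macro_rate_lemma} to be non-negative and then substituting the definitions of $\mathcal{T}^{(s)}_2$, $\mathcal{T}^{(s)}_B$, $\mathcal{R}_{c2}$, and $\mathcal{R}_{cB}$ to extract the common $W_2/W_1$ factor. The paper gives no separate proof, so your explicit sign analysis of the prefactor and the resource-accounting justification of ``feasibility $\Leftrightarrow \mathcal{T}^{(s)}_1 \geq 0$'' is, if anything, slightly more careful than the original.
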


According to Corollary~\ref{col}, the feasibility of the CP/UP split architecture is mainly limited by the average SINR experienced by the phantom cell users in MBSs.  Corollary~\ref{col} also suggests possible factors that can be manipulated to ensure the feasibility of the CP/UP split architecture are bandwidth assignment, relative BS densities, and/or control reduction factors.
\subsection{Per-user Mobility-aware Throughput Model}

\normalsize
The above expressions give the expected throughput for a typical user without capturing the main effects of network densifications. To have a realistic assessment to the densification gains, both throughput gains and the handover effects should be incorporated into the analysis. On one hand, network densification shrinks the BSs footprint, which reduces the number of users served by each BS and increases the share each user gets from his serving BS's throughput. On the other hand, network densification shrinks the BSs footprint, which increases the handover rate and overhead.
 During handover execution, the user releases the serving BS session and establishes a new session with the target BS. We assume that no data is delivered during handover execution and only handover-related signaling is communicated to the user. To incorporate the handover delay into the throughput expressions, we first compute the handover cost\footnote{The handover cost is a dimensionless unit which is computed as delay $\left(\frac{\text{\emph{sec}}}{{\text{\emph{handover}}}}\right) \times \text{velocity} \left(\frac{\text{\emph{meter}}}{\text{\emph{sec}}}\right) \times \text{handover rate} \left(\frac{\text{\emph{handovers}}}{\text{\emph{meter}}}\right)$.} for the conventional and CP/UP split architectures, which is the average duration consumed in handovers per unit time. Then we eliminate the handover duration from the throughput expressions (\ref{throughput_conventional_1})-(\ref{macro_rate_lemma}). For the conventional network architecture, the handover cost is expressed as:

\footnotesize
\begin{equation}\label{conv_cost}
D^{(c)}_{HO} = \left(\left(1-\mathcal{X}\right)d^{(c)}+\mathcal{X}\tilde{d}^{(c)}\right) \mathcal{V} \sum_i \sum_j HO_{ij}^{(c)},
\end{equation}
\normalsize

\noindent where $d^{(c)}$ and $\tilde{d}^{(c)}$ are the delays incurred by non X2 interface handover and X2 interface handover, respectively. $\mathcal{X}$ is the probability that an X2 interface is available between the serving and target BSs. In the CP/UP split network architecture, the delay incurred by an inter-anchor handover is different from the delay incurred by an intra-anchor handover, because the intra-anchor handover is always transparent to the core network.\footnote{It is expected that removing  the core network delay from a handover in the CP/UP split network (intra-anchor handover) reduces the handover delay by 50\% compared to a handover in the conventional network \cite{mahmoodiusing}.} On the other hand, all inter-anchor handovers are managed through the MME in the core network unless an X2 interface is available. Therefore, the handover cost for the CP/UP split architecture is given by:

\footnotesize
\begin{equation}\label{cpup_cost}
D^{(s)}_{HO} =  \mathcal{V} \left( MHO^{(s)} \left(\left(1-\mathcal{Z}\right)d^{(s)}_m+\mathcal{Z}\tilde{d}_m^{(s)}\right) + VHO^{(s)} d^{(s)}_{v}\right),
\end{equation}
\normalsize

\noindent where $d^{(s)}_{m}$, $\tilde{d}_m^{(s)}$, and $d^{(s)}_{v}$ are the delays incurred by an inter-anchor handover without X2 interface, an inter-anchor handover with X2 interface, and an intra-anchor handover, respectively. $\mathcal{Z}$ is the probability that a direct X2 connectivity is available between the serving and target MBSs.

Incorporating the handover delay in to the throughput analysis, assuming that each BS uniformly distributes the resources across the users it serves, and using the law of total probability, the average per-user throughput along his trajectory for the conventional and CP/UP split architectures are, respectively, expressed as:

\footnotesize
\begin{equation}\label{watpu_c}
AT^{(c)}=\left(\frac{\mathcal{A}_{1}\mathcal{T}^{(c)}_1}{\mathcal{N}_1}+\frac{\mathcal{A}_{2}\mathcal{T}^{(c)}_2}{\mathcal{N}_2}+\frac{\mathcal{A}_{B}\mathcal{T}^{(c)}_B}{\mathcal{N}_B}\right)\left(1-\min\left(1,D^{(c)}_{HO}\right)\right), 
\end{equation}
\normalsize
\footnotesize
 \begin{equation}\label{watpu_v}
AT^{(s)}=\left(\frac{\mathcal{A}_{1}\mathcal{T}^{(s)}_1}{\mathcal{N}_1}+\frac{\mathcal{A}_{2}\mathcal{T}^{(s)}_2}{\mathcal{N}_2}+\frac{\mathcal{A}_{B}\mathcal{T}^{(s)}_B}{\mathcal{N}_B} \right)\left(1-\min\left(1,D^{(s)}_{HO}\right)\right),
\end{equation}
\normalsize

\noindent where $\mathcal{A}_j$ is the probability of being served by a BSs in $j \in \{1,2,B\}$ case and $\mathcal{N}_j$ is the expected number of users sharing the BS resources with the typical user in the $j \in \{1,2,B\}$ case. Note that $\mathcal{A}_j$  and $\mathcal{N}_j$ in  \eqref{watpu_c} and \eqref{watpu_v} are independent from the network architecture and  are calculated according to the association rule  \eqref{sets}. The effect of control signaling offloaded to the MBSs in the CP/UP split is already captured by $\mathcal{T}_j^{(s)}$.

 Eqs. \eqref{watpu_c} and \eqref{watpu_v} are the main performance metrics in this paper, which are the mobility aware per-user average throughput in the conventional and CP/UP split architectures. It is worth re-emphasizing that  \eqref{watpu_c} and \eqref{watpu_v} assume that the users have long trajectories, that each user passes through all association states during their trajectories, and that the mobility model preserves the users spatial uniformity across the network. It is worth mentioning that when the average cell dwell time becomes less than the handover delay, the handover costs in \eqref{conv_cost} and \eqref{cpup_cost} are greater than unity. Consequently, the network fails to support users and the average throughputs in \eqref{watpu_c} and \eqref{watpu_v} are nullified.

 Exploiting the long trajectories and users spatial uniformity, the association probabilities and BS loads can be obtained by following \cite{Sarabjot2013partition} and \cite{jo2012heterogeneous}, respectively. In particular, the association probabilities $\mathcal{A}_{1}$,  $\mathcal{A}_{2}$ and $\mathcal{A}_{B}$ can be viewed as the percentages of the $\mathbb{R}^2$ domain served by the MBSs, the unbiased SBS, and the biased SBS, respectively. Consequently, the association probabilities are given by \cite{Sarabjot2013partition}:

\footnotesize
\begin{equation}\label{A1}
\mathcal{A}_{1}=2\pi\lambda_{1}\int_{0}^\infty r\mbox{ exp }\left(-\pi\left(\lambda_{1}r^{2}+\lambda_{2}\tilde{P}_{21} ^{\frac{2}{\alpha_{2}}}r^{\frac{2\alpha_{1}}{\alpha_{2}}} \right)\right)\mbox{d}r,
\end{equation}
\begin{equation}\label{A2}
\mathcal{A}_{2}=2\pi\lambda_{2}\int_{0}^\infty r\mbox{ exp }\left(-\pi\left(\lambda_{2}r^{2}+\lambda_{1}P_{12}^{\frac{2}{\alpha_{1}}}r^{\frac{2\alpha_{2}}{\alpha_{1}}} \right)\right)\mbox{d}r,
\end{equation}
\begin{multline}\label{A3}
\mathcal{A}_{B}=2\pi\lambda_{2}\int_{0}^\infty r\Bigg\{ \mbox{ exp }\left[-\pi\left( \lambda_{1}\left(\tilde{P}_{12}^{\frac{2}{\alpha_{1}}}r^{\frac{2\alpha_{2}}{\alpha_{1}}}+\lambda_{2}r^{2}   \right)\right)\right] \\ - \mbox{ exp } \left[-\pi\left( \lambda_{1}\left(P_{12}^{\frac{2}{\alpha_{1}}}r^{\frac{2\alpha_{2}}{\alpha_{1}}}+\lambda_{2}r^{2}   \right)\right)\right] \Bigg\}\mbox{d}r,
\end{multline}
\normalsize
\noindent {\color{black} where $P_{12}=\frac{P_{1}}{P_{2}}$, $P_{21}=\frac{1}{P_{12}}$, $\tilde{P}_{12}=\frac{P_{1}}{BP_{2}}$, and $\tilde{P}_{21}=\frac{1}{\tilde{P}_{12}}$}.

Since the user spatial uniformity is preserved, the average number of users sharing the resources with the typical user for each of the association cases is computed as follows \cite{Sarabjot2013partition}:

\footnotesize
\begin{equation}\label{load}
   \mathcal{N}_{j}=\frac{1.28\lambda^{(u)}\mathcal{A}_{j}}{\lambda_{J(j)}}+1\notag,
 \end{equation}
 \normalsize

\noindent where $J(j)$ is a map from user set association index $j \in \{1,2,B\}$ to serving tier index $k \in \{1,2\}$ as follows: $J(1) = 1$, $J(2) = J(B) = 2$. Therefore,

\footnotesize
\begin{align}
\mathcal{N}_{1}&=1.28\left(2\pi\lambda^{(u)}\int_{0}^{\infty}r\mbox{ exp}\Bigg\{-\pi\left[\lambda_{1}r^{2}+\lambda_{2}\tilde{P}_{21}^{\frac{2}{\alpha_{2}}}r^{\frac{2\alpha_{1}}{\alpha_{2}}}\right]\Bigg\}\mbox{d}r\right)+1,\notag \\
\mathcal{N}_{2}&=1.28\left(2\pi\lambda^{(u)}\int_{0}^{\infty}r\mbox{ exp}\Bigg\{-\pi\left[\lambda_{1}P_{12}^{\frac{2}{\alpha_{1}}}r^{\frac{2\alpha_{2}}{\alpha_{1}}}+\lambda_{2}r^{2}\right]\Bigg\}\mbox{d}r\right)+1,\notag\\
\mathcal{N}_{B}&=1.28\biggl(2\pi\lambda^{(u)}\int_{0}^{\infty}r\Bigg[\mbox{ exp}\left(-\pi\left(\lambda_{1}P_{12}^{\frac{2}{\alpha_{1}}}r^{\frac{2\alpha_{2}}{\alpha_{1}}}+\lambda_{2}r^{2}\right) \right) \notag \\&-\mbox{ exp}\left(-\pi \left(\lambda_{1}\tilde{P}_{12}^{\frac{2}{\alpha_{1}}}r^{\frac{2\alpha_{2}}{\alpha_{1}}}+\lambda_{2}r^{2}  \right)\right)\Bigg]\mbox{d}r\biggr)+1\notag.
\end{align}
\normalsize

An important scenario of interest is the case of equal path-loss exponents ($\alpha_1=\alpha_2= 4$), which not only simplifies the analysis but also a practical value for outdoor cellular communications in urban environments \cite{elsawy2013survey, Sarabjot2013partition,jo2011outage,jo2012heterogeneous, elsawy2014uplink,dhillon2012uplink, mimo2013Di_Renzo, modeling_MIMO_ASE_harpreet, cao2012optimal, mukherjee2013energy, zzz, Hazem2015}. In this case, the association probabilities and BS loads reduce to:

\small
\begin{align}
\mathcal{A}_{1}&=\frac{\lambda_{1}}{\lambda_{1}+\lambda_{2} \sqrt{\tilde{P}_{21}}},\notag
\mathcal{A}_{2}=\frac{\lambda_{2}}{\lambda_{1}\sqrt{P_{12}}+\lambda_{2}},\notag\\ \mathcal{A}_{B}&=\frac{\lambda_{2}}{\lambda_{1}\sqrt{\tilde{P}_{12}}+\lambda_{2}}-\frac{\lambda_{2}}{\lambda_{1}\sqrt{P_{12}}+\lambda_{2}},
\label{assoc_spec}
\end{align}
\normalsize
and
\vspace{2mm}
\small
{$ \mathcal{N}_{1} =\frac{1.28\lambda^{(u)}}{\lambda_{1}+\lambda_{2}\sqrt{\tilde{P}_{21}}}+1$}, {$\mathcal{N}_{2} = \frac{1.28\lambda^{(u)}}{\lambda_{1}\sqrt{P_{12}}+\lambda_{2}}+1$},

{$\mathcal{N}_{B} = 1.28\left(\frac{\lambda^{(u)}}{\lambda_{1}\sqrt{\tilde{P}_{12}}+\lambda_{2}}-\frac{\lambda^{(u)}}{\lambda_{1}\sqrt{P_{12}}+\lambda_{2}}\right)+1$}.
\normalsize
\vspace{2mm}

The missing components to calculate \eqref{watpu_c} and \eqref{watpu_v} are the spectral efficiencies (i.e., $\mathcal{SE} = \mathbb{E}[\ln(1+{\rm SINR})]$) and handover cost (i.e., $D_{HO}$), which are characterized in Section~\ref{Performance_Analysis} and Section~\ref{mobility_analysis}, respectively.



\section {SINR and Spectral Efficiency Characterization}
\label{Performance_Analysis}

As mentioned earlier, for tractability, we use the spatially averaged spectral efficiency for stationary users to infer the average spectral efficiency for mobile users. This assumption is validated later in Section \ref{validation_and_result} and shown to give accurate approximation for the SINR distribution.

To characterize the SINR, and hence the spectral efficiency,  we first characterize the service distance distribution. Then, we characterize the {\rm SINR} and spectral efficiency for both the conventional and CP/UP split network architectures.



\subsection{Service Distances}
\label{distance_and_association}


As shown in Fig.~\ref{split_and_no_split}, we need to characterize five service distances, namely ${R_{1}}$, ${R_{2}}$, ${R_{B}}$, ${R_{c2}}$, and ${R_{cB}}$. The conventional service distances ${R_{1}}$, ${R_{2}}$, ${R_{B}}$, which are for  users in $\Large\emph{u}_1$, $\Large\emph{u}_2$, and $\Large\emph{u}_B$ respectively, are characterized in \cite{Sarabjot2013partition}. Conditioned on the association, the probability density functions (PDFs) of the  ${R_{1}}$, ${R_{2}}$, and ${R_{B}}$ are given by

\footnotesize
\begin{align} \label{pdf_distance}
\mathbf{f}_{R_{1}}(r)=&\frac{2\pi\lambda_{1}}{\mathcal{A}_{1}}r e^{-\pi\left(\lambda_{1}r^{2}+\lambda_{2}\tilde{P}_{21}^{\frac{2}{\alpha_{2}}}r^{\frac{2\alpha_{1}}{\alpha_{2}}} \right)}, r \geq 0,
\end{align}
\begin{align} \label{r2}
&\mathbf{f}_{R_{2}}(r)=\frac{2\pi\lambda_{2}}{\mathcal{A}_{2}}r e^{-\pi\left(\lambda_{2}r^{2}+\lambda_{1}P_{12}^{\frac{2}{\alpha_{1}}}r^{\frac{2\alpha_{2}}{\alpha_{1}}} \right)}, r \geq 0,
\end{align}
\begin{multline} \label{rb}
\mathbf{f}_{R_{B}}(r)=\frac{-2\pi\lambda_{2}}{\mathcal{A}_{B}}r\Bigg[ e^{-\pi\left(\lambda_{1}P_{12}^{\frac{2}{\alpha_{1}}}r^{\frac{2\alpha_{2}}{\alpha_{1}}}+\lambda_{2}r^{2}\right)}-\\ e^{-\pi \left(\lambda_{1}\tilde{P}_{12}^{\frac{2}{\alpha_{1}}}r^{\frac{2\alpha_{2}}{\alpha_{1}}}+\lambda_{2}r^{2}  \right)}\Bigg], r \geq 0.
\end{multline}
\normalsize

As shown in Fig.~\ref{split_and_no_split}, the association for $\Large\emph{u}_1$, and the data association for $\Large\emph{u}_2$ and $\Large\emph{u}_B$ in the CP/UP split case have similar distribution to  ${R_{1}}$, ${R_{2}}$, and ${R_{B}}$ given in \eqref{pdf_distance}, \eqref{r2}, and \eqref{rb}, respectively. The distributions for control link distances of the CP/UP split architecture are  given by the following lemma

\begin{lemma} \label{lem_distances}
Let $R_{c2}$ and $R_{cB}$ denote the distances from the MBS that provides the control signaling to $\Large\emph{u}_2$ and $\Large\emph{u}_B$, respectively, in a cellular network with the CP/UP split architecture. Then the distributions of $R_{c2}$ and $R_{cB}$  are given by

\footnotesize
\begin{align}
&\mathbf{f}_{R_{c2}}(r) = \frac{2 \pi \lambda_1 r}{\mathcal{A}_2}  \left(  e^{-\pi \lambda_1 r^2}  -  e^{-\pi \left(\lambda_1 r^2+ \lambda_2 P_{21}^{\frac{2}{\alpha_1}} r^{\frac{2 \alpha_2}{\alpha_1}} \right)} \right), r \geq 0,
\end{align}
\normalsize

\footnotesize
\begin{align}
& \!\!\!\!\!\!\!\!\mathbf{f}_{R_{cB}}(r)=\frac{2 \pi \lambda_1 r}{\mathcal{A}_B}  \left(  e^{-\pi \left(\lambda_1 r^2+ \lambda_2 P_{21}^{\frac{2}{\alpha_1}} r^{\frac{2 \alpha_2}{\alpha_1}} \right)}\right. \notag \\
& \left. \quad  \quad  \quad  \quad  \quad  \quad  \quad  \quad  \quad  \quad - e^{-\pi \left(\lambda_1 r^2+ \lambda_2 \tilde{P}_{21}^{\frac{2}{\alpha_1}} r^{\frac{2 \alpha_2}{\alpha_1}} \right)} \right), r \geq 0.
\end{align}
\normalsize
\end{lemma}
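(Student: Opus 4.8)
The plan is to observe that, in the CP/UP split architecture, the control signaling for every small-cell user is delivered by its \emph{nearest} macro BS, because the control association is governed by the ordinary Voronoi tessellation built with respect to the macro tier $\mathbf{\Phi}_1$ alone (Section~\ref{Spectrum_Allocation_Control_Burden}). Consequently both $R_{c2}$ and $R_{cB}$ are, as random variables, the distance $r_1$ from the typical user to the nearest MBS; the only thing that distinguishes them is the data-association event on which we condition, namely $u\in\mathcal{\Large\emph{u}}_2$ for $R_{c2}$ and $u\in\mathcal{\Large\emph{u}}_B$ for $R_{cB}$. I would therefore obtain each control-distance density from the conditional-density identity
\[
\mathbf{f}_{R_{cj}}(r)=\frac{\mathbf{f}_{r_1}(r)\,\mathbb{P}\!\left(u\in\mathcal{\Large\emph{u}}_j\mid r_1=r\right)}{\mathbb{P}\!\left(u\in\mathcal{\Large\emph{u}}_j\right)},\qquad j\in\{2,B\},
\]
where the denominators are precisely the association probabilities $\mathcal{A}_2$ and $\mathcal{A}_B$ in \eqref{A2} and \eqref{A3}.

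The two ingredients are then straightforward. First, the nearest-MBS distance density follows from the void probability of the PPP $\mathbf{\Phi}_1$, $\mathbb{P}(r_1>r)=e^{-\pi\lambda_1 r^2}$, so that $\mathbf{f}_{r_1}(r)=2\pi\lambda_1 r\,e^{-\pi\lambda_1 r^2}$. Second, fixing $r_1=r$, I would translate the power-based membership rules of \eqref{sets} into constraints on the nearest-SBS distance $r_2$: the rule for $\mathcal{\Large\emph{u}}_2$, $P_2 r_2^{-\alpha_2}>P_1 r_1^{-\alpha_1}$, becomes $r_2<P_{21}^{1/\alpha_2} r^{\alpha_1/\alpha_2}$, whereas the two-sided rule for $\mathcal{\Large\emph{u}}_B$ becomes the annular condition $P_{21}^{1/\alpha_2} r^{\alpha_1/\alpha_2}\le r_2<\tilde{P}_{21}^{1/\alpha_2} r^{\alpha_1/\alpha_2}$. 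Since $\mathbf{\Phi}_2$ is independent of $\mathbf{\Phi}_1$, the conditional event depends only on $r_2$ and is evaluated with the complementary void probability $\mathbb{P}(r_2\ge x)=e^{-\pi\lambda_2 x^2}$; the one-sided constraint gives the conditional probability $1-e^{-\pi\lambda_2 P_{21}^{2/\alpha_2} r^{2\alpha_1/\alpha_2}}$, and the annular one gives a difference of two such exponentials.

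Substituting these pieces into the identity, folding the factor $e^{-\pi\lambda_1 r^2}$ into each exponential and dividing by $\mathcal{A}_2$ (respectively $\mathcal{A}_B$) then reproduces the stated densities; the construction is exactly the one used for $\mathbf{f}_{R_2}$ and $\mathbf{f}_{R_B}$ in \eqref{r2} and \eqref{rb}, but with the roles of the two tiers interchanged, which also serves as a useful consistency check (the two forms coincide in the equal-exponent regime $\alpha_1=\alpha_2$ of primary interest). The only delicate step is the second one: care is required in inverting the path-loss inequalities, since raising to the power $-1/\alpha_2$ reverses the inequality and reshuffles the exponents on both $r$ and the power ratios, and one must track these directions to decide whether $\mathbb{P}(r_2<x)$ or $\mathbb{P}(r_2\ge x)$ is needed. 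Establishing that, conditioned on $r_1$, membership is a function of the independent process $\mathbf{\Phi}_2$ only is what makes the conditional probabilities collapse cleanly into void probabilities, and it is this independence, together with the correct inequality bookkeeping, that I expect to be the main thing to get right.
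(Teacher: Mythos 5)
Your proposal is correct and follows essentially the same route as the paper's Appendix B: both identify the control BS as the nearest MBS, condition on the data-association event, and exploit the independence of $\mathbf{\Phi}_1$ and $\mathbf{\Phi}_2$ so that the conditional membership probability reduces to void probabilities of the small-cell process over the interval $r_2 < P_{21}^{1/\alpha_2}r^{\alpha_1/\alpha_2}$ (respectively the annulus up to $\tilde{P}_{21}^{1/\alpha_2}r^{\alpha_1/\alpha_2}$) --- the paper merely leaves this as an explicit integral of the joint pdf $f_{r_1,r_2}$ rather than evaluating it in closed form as you do. As a minor aside, your exponents $P_{21}^{2/\alpha_2}r^{2\alpha_1/\alpha_2}$ are the ones that actually follow from the appendix's integration limits; the indices printed in the lemma statement appear transposed, which is immaterial in the equal-exponent case $\alpha_1=\alpha_2$ you note.
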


\begin{proof}
See Appendix \ref{distances}.
\end{proof}

\subsection{Coverage Probability Analysis}
\label{ATR_label}

The coverage probability is defined by the complementary cumulative distribution function (CCDF) of the SINR (i.e, $\mathbb{P}[{\rm SINR}>\theta]$, where $\theta$ denotes the predefined threshold for correct signal reception). Without loss of generality, the {\rm SINR} analysis is performed for a {\color{black} test mobile user} located at the origin.   According to Slivnyak's theorem, all other users have statistical SINR properties equivalent to that of the test user located at the origin \cite{haenggi2009interference}. Therefore, the analysis holds for an arbitrary {\color{black}mobile user} located at any other location.

For the sake of exposition, we define four types of interferences caused by the BSs in $\Phi_1$ and $\Phi_2$ with respect to the origin, which are
\begin{itemize}
\item The interference from all MBSs $\mathcal{I}_1 = \sum\limits_{x\in\mathbf{\Phi_{1}}} P_{1}H_{x}x^{-\alpha_{1}}$.
\item The interference from all MBSs excluding the one nearest to the origin $\mathcal{I}^{o}_1 = \sum\limits_{x\in\mathbf{\Phi_{1}} \setminus x_o} P_{1}H_{x}x^{-\alpha_{1}}$.
\item The interference from all SBSs $\mathcal{I}_2 = \sum\limits_{x\in\mathbf{\Phi_{2}}} P_{2}H_{x}x^{-\alpha_{2}}$.
\item The interference from all SBSs excluding the one nearest to the origin $\mathcal{I}^{o}_2 = \sum\limits_{x\in\mathbf{\Phi_{2}} \setminus x_o} P_{2}H_{x}x^{-\alpha_{2}}$.
\end{itemize}



The SINR at the test user's location, the origin, can be defined as

\footnotesize
\begin{equation} \label{gen_sinr}
{\rm SINR} = \frac {P_{BS} H r_0^{-\alpha}  }{\mathcal{I}_{agg}+\sigma^2},
\end{equation}
\normalsize

\noindent where $P_{BS}$ is the serving BS transmit power, $H$ is the random channel power gain, $r_o$ is the distance between the test user and the serving BS, $\mathcal{I}_{agg}$ is the aggregate interference, and $\sigma^2$ is the noise power. The parameters in \eqref{gen_sinr} to compute the SINR experienced by the users in $\mathcal{\Large\emph{u}}_k$, $k \in \{1,2,B\}$ for the conventional and CP/UP split architectures are given in Table~\ref{SINR_Table}.

\footnotesize
\begin{table}[h]
\centering
\small
\caption{{\color{black} SINR Parameters} }
\resizebox{0.45 \textwidth}{!}{\begin{tabular}{|c|c|c|c|c|c|c|c|c|}
\hline
 \multirow{2}{*}{k}& \multicolumn{4}{|c|}{Conventional}  & \multicolumn{4}{|c|}{CP/UP split} \\ \cline{2-9}
& \textbf{SINR} & $P_{BS}$ & $r_o$ &$\mathcal{I}_{agg}$ & \textbf{SINR} &$P_{BS}$ & $r_o$ &$\mathcal{I}_{agg}$ \\ \hline \hline
\multirow{2}{*}{1}& \multirow{2}{*}{ ${\rm SINR}^{(c)}_{1}$} &  \multirow{2}{*}{$P_1$}  & \multirow{2}{*}{$R_1$} & \multirow{2}{*}{$\mathcal{I}^o_1 + \mathcal{I}_2$} &   \multirow{2}{*}{${\rm SINR}^{(s)}_{1}$} & \multirow{2}{*}{$P_1$}  & \multirow{2}{*}{$R_1$} & \multirow{2}{*}{$\mathcal{I}^o_1$} \\
 & &   &  &  &  &  &  & \\ \hline
\multirow{2}{*}{2}&  \multirow{2}{*}{ ${\rm SINR}^{(c)}_{2}$} &  \multirow{2}{*}{$P_2$}  & \multirow{2}{*}{$R_2$} & \multirow{2}{*}{$\mathcal{I}_1 + \mathcal{I}^o_2$} &  ${\rm SINR}^{(s)}_{d2}$ &$P_2$  & $R_2$ & $ \mathcal{I}^{o}_2$ \\   \cline{6-9}
 & &   &  &  &  ${\rm SINR}^{(s)}_{c2}$ &$P_1$  & $R_{c2}$ & $\mathcal{I}^o_1 $ \\ \hline
\multirow{2}{*}{B}&  \multirow{2}{*}{ ${\rm SINR}^{(c)}_{B}$} &  \multirow{2}{*}{$P_2$}  & \multirow{2}{*}{$R_B$} & \multirow{2}{*}{$ \mathcal{I}^{o}_2$} &  ${\rm SINR}^{(s)}_{dB}$ &$P_2$  & $R_{B}$ & $\mathcal{I}^{o}_2$ \\   \cline{6-9}
 & &   &  &  &  ${\rm SINR}^{(s)}_{cB}$ &$P_1$  & $R_{cB}$ & $\mathcal{I}^o_1 $ \\ \hline
\end{tabular}}
\label{SINR_Table}
\end{table}
\normalsize

As shown in Table~\ref{SINR_Table}, in the conventional network architecture, the macro and non-biased small cells users experience inter-tier interference, which is due to the employed universal frequency reuse scheme. In contrast, the dedicated spectrum accesses employed by the CP/UP split architecture eliminates the inter-tier interference. Note that the biased users do not experience inter-tier interference in the conventional network architecture due to the ABS interference coordination employed by the MBSs. Table~\ref{SINR_Table} also shows the different SINR experienced by the data and control links in the CP/UP split architecture, which is due to the employed decoupled data and control associations.

For a predefined threshold reception $\theta$, the coverage probability $\mathcal{C}=\mathbb{P}[{\rm SINR}>\theta]$ of all users are characterized by the following lemma.

\begin{lemma}The {\rm SINR} coverage for the conventional network is given by
\label{average_trans_rate_three_types}

\footnotesize
{\color{black}
\begin{align}\label{macro_cell_coverage_general_11}
&\mathcal{C}^{(c)}_{1}=\int_{0}^\infty{\exp} \Bigg(-\tilde{\lambda}_{1}r^{2}\theta\text{ }_{2}F_{1}\left(1,1-\frac{2}{\alpha_{1}};2-\frac{2}{\alpha_{1}};-\theta\right) \notag \\&-\frac{\tilde{\lambda}_{2}}{B}r^{2}\theta \tilde{P}_{21}^{\frac{2}{\alpha_{2}}} \text{ }_{2}F_{1}\left(1,1-\frac{2}{\alpha_{2}};2-\frac{2}{\alpha_{2}};\frac{-\theta}{B}\right)\Bigg) \mathbf{f}_{R_{1}}(r) \text{d}r,
\end{align}}
{\color{black}
\begin{align}\label{small_cell_rate_bias_general_11}
\mathcal{C}^{(c)}_{2}&=\int_{0}^\infty \exp\left(-\tilde{\lambda}_{1}r^{2}\theta P_{12}^{\frac{2}{\alpha_{1}}}\text{ }_{2}F_{1}\left(1,1-\frac{2}{\alpha_{1}};2-\frac{2}{\alpha_{1}};-\theta\right)\right. \notag \\
&\left. -\tilde{\lambda}_{2}r^{2}\theta\text{ }_{2}F_{1}\left(1,1-\frac{2}{\alpha_{2}};2-\frac{2}{\alpha_{2}};-\theta\right)
\right)  \mathbf{f}_{R_{2}}(r) \text{d}r,
\end{align}
}
\footnotesize
{\color{black}
\begin{multline}\label{small_cell_rate_without_bias_general}
\mathcal{C}^{(c)}_{B} = \\\int_{0}^\infty\mbox{exp}\left(-\tilde{\lambda}_{2}r^{2}\theta\text{ }_{2}F_{1}\left(1,1-\frac{2}{\alpha_{2}};2-\frac{2}{\alpha_{2}};-\theta\right)\right)  \mathbf{f}_{R_{B}}(r) \mbox{d}r.
\end{multline}
}
\normalsize

The {\rm SINR} coverage for macrocell users in the CP/UP split network architecture is given by

\footnotesize
{\color{black}
\begin{align}\label{macro_cell_coverage_general}
\mathcal{C}^{(s)}_{1}&=\int_{0}^\infty{\exp}\left(-\tilde{\lambda}_{1}r^{2}\theta\text{ }_{2}F_{1}\left(1,1-\frac{2}{\alpha_{1}};2-\frac{2}{\alpha_{1}};-\theta\right) \right) \mathbf{f}_{R_{1}}(r) \text{d}r.
\end{align}
}
\normalsize
The {\rm SINR} coverage for the data connections for non-biased phantom cell users is given by
\footnotesize
{\color{black}
\begin{align}\label{small_cell_rate_bias_general}
\mathcal{C}^{(s)}_{d2}&=\int_{0}^\infty \exp\left(-\tilde{\lambda}_{2}r^{2}\theta\text{ }_{2}F_{1}\left(1,1-\frac{2}{\alpha_{2}};2-\frac{2}{\alpha_{2}};-\theta\right)\right)  \mathbf{f}_{R_{2}}(r) \text{d}r, \end{align}
}
\normalsize

\noindent and the biased small cell users $\mathcal{C}^{(s)}_{dB} = \mathcal{C}^{(c)}_{B}$ given in \eqref{small_cell_rate_without_bias_general}. The SINR coverage probabilities for the control links of the non-biased and biased phantom cell users  are given by

\footnotesize
\begin{align}\label{small_cell_rate_bias_general}
&\mathcal{C}^{(s)}_{c2}  =   \int_{0}^\infty \exp\left(-\tilde{\lambda}_{1}r^{2}\theta\text{ }_{2}F_{1}\left(1,1-\frac{2}{\alpha_{1}};2-\frac{2}{\alpha_{1}};-\theta\right)\right)  \mathbf{f}_{R_{c2}}(r) \text{d}r, \end{align}


\begin{multline}\label{small_cell_rate_without_bias_general_123}
\mathcal{C}^{(c)}_{cB}=\\\int_{0}^\infty\mbox{exp}\left(-\tilde{\lambda}_{1}r^{2}\theta\text{ }_{2}F_{1}\left(1,1-\frac{2}{\alpha_{1}};2-\frac{2}{\alpha_{1}};-\theta\right)\right)  \mathbf{f}_{R_{cB}}(r) \mbox{d}r,
\end{multline}

\normalsize
\normalsize

\noindent where {\color{black}$\tilde{\lambda}_{k} = \frac{2 \pi \lambda_{k}}{\alpha_{k}-2}$},$ \text{ }_{2}F_{1}(\cdot,\cdot;\cdot;\cdot)$ is the hypergeometric function, and $\mathbf{f}_{R_{1}}(r)$, $\mathbf{f}_{R_{2}}(r)$, $\mathbf{f}_{R_{B}}(r)$, $\mathbf{f}_{R_{c2}}(r)$, and $\mathbf{f}_{R_{cB}}(r)$  are given in Section \ref{distance_and_association}.

\end{lemma}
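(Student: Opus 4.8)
The plan is to exploit the fact that every SINR in Table~\ref{SINR_Table} has the common form \eqref{gen_sinr} with an exponential (Rayleigh) desired gain $H\sim\exp(1)$ and, in the interference-limited regime, a negligible noise floor, so that each coverage probability collapses to a Laplace transform of the aggregate interference. By Slivnyak's theorem I fix the test user at the origin, condition on the serving distance $r_0$, and use $\mathbb{P}[H>x]=e^{-x}$ to write
\begin{align}
\mathbb{P}\!\left[{\rm SINR}>\theta \mid r_0\right]
&=\mathbb{P}\!\left[H>\tfrac{\theta\,\mathcal{I}_{agg}}{P_{BS}\,r_0^{-\alpha}}\;\middle|\;r_0\right] \notag\\
&=\mathbb{E}\!\left[e^{-s\,\mathcal{I}_{agg}}\right]=\mathcal{L}_{\mathcal{I}_{agg}}(s),\qquad s=\tfrac{\theta r_0^{\alpha}}{P_{BS}}. \notag
\end{align}
Because the two tiers are independent PPPs, the interference decomposes into independent contributions and $\mathcal{L}_{\mathcal{I}_{agg}}$ factors into the per-tier transforms read off from the $\mathcal{I}_{agg}$ column of Table~\ref{SINR_Table}.

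For each interfering tier $k$ I would evaluate the transform through the probability generating functional of the PPP. Since $H\sim\exp(1)$ gives $\mathbb{E}_H[e^{-sP_kHv^{-\alpha_k}}]=(1+sP_kv^{-\alpha_k})^{-1}$, one obtains
\begin{align}
\mathcal{L}(s)=\exp\!\left(-2\pi\lambda_k\!\int_{d_k}^{\infty}\!\frac{sP_kv^{-\alpha_k}}{1+sP_kv^{-\alpha_k}}\,v\,\mathrm{d}v\right), \notag
\end{align}
where $d_k$ is the interference-free radius of tier $k$. The crucial modelling step is to extract $d_k$ from the association rule \eqref{sets}: for the tier hosting the serving BS the nearest point is the server itself, so $d_k=r_0$; for the cross tier the biased inequality pushes the nearest interferer beyond a radius fixed by $P_{2}Br_{2}^{-\alpha_2}\le P_{1}r_0^{-\alpha_1}$. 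Substituting $u=v/d_k$ reduces the integral to the standard form whose value is the Gauss hypergeometric function, producing generic blocks $\tilde\lambda_k\,d_k^{2}\,\theta'\,{}_2F_1(1,1-\tfrac{2}{\alpha_k};2-\tfrac{2}{\alpha_k};-\theta')$ with an effective threshold $\theta'$: for the serving tier $d_k=r_0$ and $\theta'=\theta$, while the cross-tier boundary rescales $\theta'=\theta/B$ and injects the $\tilde{P}_{21}^{2/\alpha_2}$ and $1/B$ factors seen in \eqref{macro_cell_coverage_general_11}.

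Finally I would de-condition by integrating the conditional coverage against the service-distance PDFs $\mathbf{f}_{R_1},\mathbf{f}_{R_2},\mathbf{f}_{R_B}$ of Section~\ref{distance_and_association} and $\mathbf{f}_{R_{c2}},\mathbf{f}_{R_{cB}}$ of Lemma~\ref{lem_distances}. Each listed case then follows by inserting its $(P_{BS},r_0,\mathcal{I}_{agg})$ triple from Table~\ref{SINR_Table}: conventional tier-$1$ and tier-$2$ users carry two exponent terms (own-tier $\mathcal{I}^{o}$ and full other-tier $\mathcal{I}$), whereas the biased user sees only $\mathcal{I}^{o}_2$ because ABS coordination blanks the macro tier, giving the single-term \eqref{small_cell_rate_without_bias_general}. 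In the CP/UP split case dedicated spectrum removes all cross-tier interference, so every coverage collapses to a single per-tier factor; in particular $\mathcal{C}^{(s)}_{dB}=\mathcal{C}^{(c)}_{B}$, and the control links reuse the macro-tier transform deconditioned against $\mathbf{f}_{R_{c2}}$ and $\mathbf{f}_{R_{cB}}$.

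I expect the second step to be the real obstacle. The transform evaluation is a routine hypergeometric integral, but correctly pinning down the interference-free radius $d_k$ for each cross-tier term from the biased association boundary and the ABS coordination is what sets the arguments and prefactors of the hypergeometric blocks and separates the macro, non-biased, and biased cases; an error there propagates into every stated expression.
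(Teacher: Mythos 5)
Your proposal follows essentially the same route as the paper's Appendix C: condition on the serving distance, express coverage as the Laplace transform of the aggregate interference via the exponential fading gain, factor the transform over the independent tiers, evaluate each factor through the PPP probability generating functional with the exclusion radius dictated by the biased association rule, reduce the resulting integral to the hypergeometric form, and de-condition against the service-distance PDFs. The only cosmetic difference is that you drop the noise term up front while the paper carries the factor $\exp(-\sigma^{2}\theta r^{\alpha}/P_{BS})$ through its intermediate step before the final expressions omit it, so the argument is the same.
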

\begin{proof}
See Appendix \ref{Average transmission rate_prof}.
\end{proof}

For the special case of equal path-loss exponents $\alpha_1=\alpha_2=4$, the coverage probabilities reduce to the  simple closed-form expressions shown below:

\footnotesize
\begin{align}\label{simulation_snir_1}
\mathcal{C}^{(c)}_{1}=\frac{\lambda_{1}+\lambda_{2}\sqrt{\tilde{P}_{21}}}{\lambda_{1}\rho(1,\theta)+ \lambda_{2}\sqrt{\tilde{P}_{21}}\rho(1,\frac{\theta}{B})},
\end{align}

\begin{align}
\mathcal{C}^{(c)}_{2}=\frac{1}{\rho(1,\theta)},
\end{align}

\begin{align}
\mathcal{C}^{(s)}_{1}=\frac{\lambda_{1}+\lambda_{2}\sqrt{\tilde{P}_{21}}}{\lambda_{1}\rho(1,\theta)+\lambda_{2}\sqrt{\tilde{P}_{21}}},
\end{align}
\begin{align}
\mathcal{C}^{(s)}_{d2}=\frac{ \lambda_{2}+\lambda_{1}\sqrt{P_{12}}}{\lambda_{2}\rho(1,\theta)+\lambda_{1}\sqrt{P_{12}}},
\end{align}

\begin{align}
\mathcal{C}^{(s)}_{c2}=\frac{\bigg(1+\frac{\lambda_{1}}{\lambda_{2}}\sqrt{P_{12}} \bigg) \bigg( 1 - \frac{1}{1+\frac{\lambda_{2}}{\lambda_1} \sqrt{P_{21}} \rho(1,\theta)^{-1} }\bigg)}{\rho(1,\theta)},
\end{align}
\normalsize

\footnotesize
\begin{align}
\mathcal{C}^{(s)}_{dB} = \mathcal{C}^{(c)}_{B}&=\frac{\lambda_{2}}{\mathcal{A}_{B}} \left( \frac{\lambda_{1}\left(\sqrt{P_{12}}-\sqrt{ \tilde{P}_{12}}\right)}{\left(\lambda_{2} \rho(1,\theta)+\lambda_{1}\sqrt{\tilde{P}_{12}}\right)\left(\lambda_{2} \rho(1,\theta)+\lambda_{1}\sqrt{P_{12}}\right)}\right), \end{align}

\begin{align}\label{simulation_snir_2}
\mathcal{C}^{(s)}_{cB} &=\frac{\lambda_{1}}{\mathcal{A}_{B}} \left( \frac{\lambda_{2}\left(\sqrt{\tilde{P}_{21}}-\sqrt{P_{21}}\right)}{\left(\lambda_{1} \rho(1,\theta)+\lambda_{2}\sqrt{P_{21}}\right)\left(\lambda_{1}\rho(1,\theta)+\lambda_{2}\sqrt{\tilde{P}_{21}}\right)}\right), \end{align}
\normalsize
\noindent{\color{black} where $\rho(a,b)=a+\sqrt{b}\arctan\left(\sqrt{b}\right)$}.
\subsection{Spectral Efficiency Analysis}

The spectral efficiency is one of the main parameters to calculate the throughput of the conventional and CP/UP split users throughputs as shown in Section~\ref{pectrum_Allocation_Control_Burden}. The spectral efficiency ($\mathcal{SE} =  \mathbbm{E}[\ln(1+{\rm SINR})]$) can be directly derived from the coverage probability as follows

\footnotesize
\begin{align}
\mathcal{SE} = \mathbbm{E}[\ln(1+{\rm SINR})]&\stackrel{(a)}{=}\int_{0}^\infty\mathbb{P}[{\ln(1+ \rm{SINR})>\zeta}]\mbox{d}\zeta \notag \\
&=\int_{0}^\infty\mathbb{P}[{\rm SINR}>(e^\zeta-1)]\mbox{d}\zeta \notag \\
&\stackrel{(b)}{=}\int_{0}^\infty\frac{\mathbb{P}[{\rm SINR}>t]}{t+1}\mbox{d}t,
\label{xdr}
\end{align}
\normalsize

\noindent where (a) follows because $\ln(1+{\rm SINR})$ is a strictly positive random variable, and (b) follows by substituting variable $t= e^\zeta-1$. For general path loss exponent, the spectral efficiencies for macro-cell and small-cell users in the shared spectrum access scheme in the conventional network are given by \eqref{new_spec1} and \eqref{new_spec2}, respectively.  For the dedicated spectrum access scheme in the CP/UP split RAN, the spectral efficiencies are given by:

\begin{figure*}
\footnotesize
\begin{align} \label{new_spec1}
\!\!\!\!\!\!\!\!\!\!\!\!\!\!\!\!\!\!\mathcal{SE}^{(c)}_{1}=\int\limits_{0}^\infty \int\limits_{0}^\infty \frac{{\exp}\Bigg(-\tilde{\lambda}_{1}r^{2}t\text{ }_{2}F_{1}\left(1,1-\frac{2}{\alpha_{1}};2-\frac{2}{\alpha_{1}};-t\right) -\tilde{\lambda}_{2} r^{2}tP_{21}^{\frac{2}{\alpha_{2}}}B^{\frac{2}{\alpha_{2}}-1}\text{ }_{2}F_{1}\left(1,1-\frac{2}{\alpha_{2}};2-\frac{2}{\alpha_{2}};\frac{-t}{B}\right)\Bigg)}{t+1} \mathbf{f}_{R_{1}}(r) \text{d}r \text{d}t,
\end{align}
\hrule
\begin{align} \label{new_spec2}
\!\!\!\!\!\!\!\!\!\!\!\!\!\!\!\!\!\!\mathcal{SE}^{(c)}_{2}&=\int\limits_{0}^\infty \int\limits_{0}^\infty \frac{\exp\left(-\tilde{\lambda}_{1}r^{2}tP_{12}^{\frac{2}{\alpha_{1}}}\text{ }_{2}F_{1}\left(1,1-\frac{2}{\alpha_{1}};2-\frac{2}{\alpha_{1}};-t\right) -\tilde{\lambda}_{2}r^{2}t\text{ }_{2}F_{1}\left(1,1-\frac{2}{\alpha_{2}};2-\frac{2}{\alpha_{2}};-t\right)\right)}{t+1}  \mathbf{f}_{R_{2}}(r) \text{d}r \text{d}t.
\end{align}
\normalsize
\hrule
\end{figure*}

\footnotesize
{\color{black}
\begin{align} \label{new_spec3}
&\mathcal{SE}^{(s)}_{1} = \notag \\
&\int\limits_{0}^\infty \int\limits_{0}^\infty \frac{{\exp}\left(-\tilde{\lambda}_{1}r^{2}t\text{ }_{2}F_{1}\left(1,1-\frac{2}{\alpha_{1}};2-\frac{2}{\alpha_{1}};-t\right) \right) }{t+1}\mathbf{f}_{R_{1}}(r) \text{d}r \text{d}t,
\end{align}
}

{\color{black}
\begin{align} \label{new_spec4}
&\mathcal{SE}^{(s)}_{d2} = \notag \\
& \int\limits_{0}^\infty \int\limits_{0}^\infty \frac{ \exp\left(-\tilde{\lambda}_{2}r^{2}t\text{ }_{2}F_{1}\left(1,1-\frac{2}{\alpha_{2}};2-\frac{2}{\alpha_{2}};-t\right)\right)}{t+1}  \mathbf{f}_{R_{2}}(r) \text{d}r \text{d}t, \end{align}
}
\normalsize

{\color{black}
\begin{align} \label{new_specd4}
&\mathcal{SE}^{(s)}_{c2} = \notag \\
& \int\limits_{0}^\infty \int\limits_{0}^\infty \frac{ \exp\left(-\tilde{\lambda}_{1}r^{2}t\text{ }_{2}F_{1}\left(1,1-\frac{2}{\alpha_{1}};2-\frac{2}{\alpha_{1}};-t\right)\right)}{t+1}  \mathbf{f}_{R_{c2}}(r) \text{d}r \text{d}t, \end{align}
}
\normalsize

\footnotesize
{\color{black}
\begin{align} \label{new_spec5}
&\mathcal{SE}^{(c)}_{B}=\mathcal{SE}^{(s)}_{dB} \notag \\
&=\int\limits_{0}^\infty \int\limits_{0}^\infty \frac{{\exp}\left(-\tilde{\lambda}_{2}r^{2}t\text{ }_{2}F_{1}\left(1,1-\frac{2}{\alpha_{2}};2-\frac{2}{\alpha_{2}};-t\right)\right)}{t+1}  \mathbf{f}_{R_{B}}(r) \mbox{d}r  \mbox{d}t,
\end{align}
}
\normalsize

\footnotesize
{\color{black}
\begin{align} \label{new_specc5}
\mathcal{SE}^{(s)}_{cB}=\int\limits_{0}^\infty \int\limits_{0}^\infty \frac{{\exp}\left(-\tilde{\lambda}_{1}r^{2}t\text{ }_{2}F_{1}\left(1,1-\frac{2}{\alpha_{1}};2-\frac{2}{\alpha_{1}};-t\right)\right)}{t+1}  \mathbf{f}_{R_{cB}}(r) \mbox{d}r  \mbox{d}t.
\end{align}
}
\normalsize

{As shown in equations \eqref{new_spec1}-\eqref{new_specc5}, two fold integrals are required to obtain the spectral efficiency for general path loss exponents, which is numerically complex to evaluate. For the special case of path loss exponents $\alpha_1=\alpha_2=4$}, the spectral efficiency  for all types of users can be evaluated via single integral as follows:

\footnotesize
\begin{align}
\mathcal{SE}^{(c)}_{1}=\int_{0}^\infty \frac{1}{t+1}\frac{\lambda_{1}+\lambda_{2}\sqrt{\tilde{P}_{21}}}{\lambda_{1}\rho(1,t)+ \lambda_{2}\sqrt{\tilde{P}_{21}}\rho(1,\frac{t}{B})}\text{d}t,
\end{align}
\begin{align}
\mathcal{SE}^{(c)}_{2}=\int_{0}^\infty\frac{1}{t+1}\frac{1}{\rho(1,t)}\text{d}t,
\end{align}
\normalsize
\footnotesize
\begin{multline}
\mathcal{SE}^{(c)}_{B}= \mathcal{SE}^{(s)}_{dB}=\int_{0}^\infty\frac{1}{t+1}\left(\frac{(\lambda_{1} \sqrt{\tilde{P}_{12}}+\lambda_{2})(\lambda_{1}\sqrt{P_{12}}+\lambda_{2})}{\lambda_{1}\left(\sqrt{P_{12}}-\sqrt{\tilde{P}_{12}}\right)}\right)\\ \left(\frac{\lambda_{1}\left(\sqrt{P_{12}}-\sqrt{\tilde{P}_{12}}\right)}{(\lambda_{2}\rho(1,t)+\lambda_{1}\sqrt{\tilde{P}_{12}})(\lambda_{2}\rho(1,t)+\lambda_{1}\sqrt{P_{12}})}\right)\mbox{d}t,
\end{multline}
\begin{align}
\mathcal{SE}^{(s)}_{1}=\int_{0}^\infty \frac{1}{t+1}\frac{\lambda_{1}+\lambda_{2}\sqrt{\tilde{P}_{21}}}{\lambda_{1}\rho(1,t)+\lambda_{2}\sqrt{\tilde{P}_{21}}}\text{d}t,
\end{align}
\begin{align}
\mathcal{SE}^{(s)}_{d2}=\int_{0}^\infty\frac{1}{t+1}\frac{\lambda_{2}+\lambda_{1}\sqrt{P_{12}}}{\lambda_{2}\rho(1,t)+\lambda_{1}\sqrt{P_{12}}} \text{d}t,
\end{align}
\begin{multline}
\mathcal{SE}^{(s)}_{c2}=\int\limits_0^\infty \frac{\bigg(1+\frac{\lambda_{1}}{\lambda_{2}}\sqrt{P_{12}} \bigg) \bigg( 1 - \frac{1}{1+\frac{\lambda_{2}}{\lambda_1} \sqrt{P_{21}}\rho(1,t)^{-1} }\bigg)}{(t+1)\rho(1,t)} dt,
\end{multline}
\begin{multline}
 \mathcal{SE}^{(s)}_{cB}=\int_{0}^\infty\frac{1}{t+1}\left(\frac{(\lambda_{1}\sqrt{\tilde{P}_{12}}  +\lambda_{2})(\lambda_{1}\sqrt{P_{12}}+\lambda_{2})}{\lambda_{1}\left(\sqrt{P_{12}}- \sqrt{\tilde{P}_{12}}\right)}\right)\\ \left( \frac{\lambda_{2}\left(\sqrt{\tilde{P}_{21}}-\sqrt{P_{21}}\right)}{\left( \lambda_{1}\rho(\sqrt{t},t)+\lambda_{2}\sqrt{P_{21}}\right)\left(\lambda_{1}\rho(\sqrt{t},t)+\lambda_{2}\sqrt{\tilde{P}_{21}}\right)} \right)\mbox{d}t.
\end{multline}
\normalsize

\section{Handover Analysis}
\label{mobility_analysis}

In this section, we take into account the effect of mobility on the system performance.  In order to compute the average throughputs in Eq. \eqref{watpu_c} and \eqref{watpu_v}, we need to compute the handover cost for both the CP/UP split network and conventional network architectures. The handover cost is a function of the handover rate per unit length of users trajectories, which is calculated in this section.

We assume that users move according to an arbitrary mobility pattern with velocity $\mathcal{V}$. The handover rate is determined based on the model obtained by {\color{black}Bao and Liang \cite{bao2015stochastic}, which gives the handover rate per unit length for arbitrary trajectories in a PPP multi-tier network.  Hence, the handover rate is independent of the underlying  mobility pattern}. Following \cite{bao2015stochastic}, the tier-$i$-to-tier-$j$ handover rate per unit length of an arbitrary trajectory is given by

\begin{equation}\label{HOC}
HO_{ij}^{(c)}= \frac{\lambda_{i}\lambda_{j}\mathcal{F}(x_{ij})}{\pi \left(\lambda_{i}+\lambda_{j} x_{ij}^{2}\right)^{\frac{3}{2}}},
\end{equation}
where $x_{12}=\left(\tilde{P}_{12}\right)^{\frac{1}{\alpha}}$, $x_{21}=\frac{1}{x_{12}}$, $x_{11}=x_{22}=1$, and
\begin{equation}
\mathcal{F}(x) = \frac{1}{x^{2}}\int_{0}^{\pi}\sqrt{(x^{2}+1)-2x\text{cos}(\theta)}d\theta.
\end{equation}

\noindent From Eq. (\ref{conv_cost}), the total handover cost per unit time in the conventional network is given by:
\footnotesize
\begin{equation}
D^{(c)}_{HO} =  \left(d^{(c)}\left(1-\mathcal{X}\right)+\tilde{d}^{(c)}\mathcal{X}\right) \frac{\mathcal{V}}{\pi}   \sum_{i=1}^2 \sum_{j=1}^2 \frac{\lambda_{i}\lambda_{j}\mathcal{F}(x_{ij})}{ \left(\lambda_{i}+\lambda_{j} x_{ij}^{2}\right)^{\frac{3}{2}}}.
\label{CHOD}
\end{equation}
\normalsize

In the CP/UP split network, an inter-anchor handover takes place when crossing a MBS-to-MBS cell boundary (see Fig.~\ref{weighted_Voronoi}); thus the inter-anchor handover rate is equivalent to the handover rate in the single-tier MBS case with density $\lambda_1$.   Following \cite{bao2015stochastic}, we calculate the inter-anchor handover rate per unit length in the CP/UP split architecture  network as follows:

\footnotesize
\begin{equation}\label{MHO}
MHO^{(s)}= \frac{4 \sqrt{\lambda_1}}{\pi}.
\end{equation}
\normalsize

\noindent As discussed earlier, intra-anchor handovers are defined as all types of handovers that do not require changing the anchor BS. Hence, the intra-anchor handover rate per unit length is given by:

\footnotesize
\begin{equation}\label{VHO}
VHO^{(s)}= \frac{2}{\pi}\left(\sum_{i=1}^2 \sum_{j=1}^2 \frac{\lambda_{i}\lambda_{j}\mathcal{F}(x_{ij})}{2 \left(\lambda_{i}+\lambda_{j} x_{ij}^{2}\right)^{\frac{3}{2}}} - 2 \sqrt{\lambda_1}\right).
\end{equation}
\normalsize
From Eq. (\ref{cpup_cost}), the total handover cost per unit time in the CP/UP split network is given by:

\footnotesize
\begin{align}
 \!\!\!\!\!\!\!\!D^{(s)}_{HO} &=  \frac{2 \mathcal{V} d^{(s)}_v }{\pi}\left(\sum_{i=1}^2 \sum_{j=1}^2 \frac{\lambda_{i}\lambda_{j}\mathcal{F}(x_{ij})}{2 \left(\lambda_{i}+\lambda_{j} x_{ij}^{2}\right)^{\frac{3}{2}}} - 2 \sqrt{\lambda_1}\right) \notag \\
 & \quad  \quad  \quad  \quad \quad  \quad+  \left(\left(1-\mathcal{Z}\right)d^{(s)}_m+\mathcal{Z}\tilde{d}_m^{(s)}\right)\mathcal{V}  \frac{4 \sqrt{\lambda_1}}{\pi} \notag \\
&= \frac{ \mathcal{V} }{\pi}  \Bigg( d^{(s)}_v \sum_{i=1}^2 \sum_{j=1}^2 \frac{\lambda_{i}\lambda_{j}\mathcal{F}(x_{ij})}{ \left(\lambda_{i}+\lambda_{j} x_{ij}^{2}\right)^{\frac{3}{2}}} \notag\\& \quad  \quad  \quad  \quad  \quad \quad  \quad + {4 \sqrt{\lambda_1}} \left(\left(1-\mathcal{Z}\right)d^{(s)}_m+\mathcal{Z}\tilde{d}_m^{(s)}- d^{(s)}_v\right)\Bigg).
\label{VHOD}
\end{align}
\normalsize
\noindent Note that the inter-anchor handover delay is equal to the conventional handover delay (i.e., $d^{(s)}_m = d^{(c)}$ and $\tilde{d}^{(s)}_m = \tilde{d}^{(c)}$) because the handover procedure is the same.  Thus we can infer from Eq. \eqref{CHOD} and \eqref{VHOD} that the handover cost depends on the relative values of $d^{(c)}$, $d^{(s)}_v$, $\lambda_2$, and $\lambda_1$.  In fact, in an ultra dense small cell network with $\lambda_2 >> \lambda_1$, we can obtain a bound on the maximum gain in terms of the handover cost that the CP/UP split architecture can offer when $\mathcal{X}=\mathcal{Z}=0$ as follows:
\begin{align} \label{huh}
{\mathcal{G}} &= \underset{ \lambda_2 \rightarrow \infty }{\lim} \frac{D^{(c)}_{HO}-D^{(s)}_{HO}}{D^{(c)}_{HO}}  = 1-\frac{d^{(s)}_v}{d^{(c)}}.
\end{align}

\noindent Note that the core network is mainly wired and the core network elements may be located far away from the network edge, and hence, core network signaling travels farther distances with lower speed\footnote{Wave propagation within any medium is less than the speed of electromagnetic waves in the air, which travels with the speed of light.}. Hence, the core network signaling may add significant delay to the handover procedure. For instance, if $d^{(c)} = 5 d^{(s)}_v$, Eq. \eqref{huh} shows that the CP/UP split architecture can offer $80\%$ reduction in the handover delay.  


\section{Model Validation and Numerical Results}
\label{validation_and_result}

In this section, we first validate our results via simulations using MATLAB. We then use the developed analytical model to compare the performance of the conventional and CP/UP split RAN architectures and obtain design insights.

Unless otherwise stated, we use the following parameters in our simulations and analysis.  The transmission powers are $P_{1}=50$ Watt and $P_{2}=5$ Watt. The bandwidth is $W=10$ MHz. The ABS factor is $\eta=0.3$. The percentage of control data in the available time/frequency resources is $\mu_{C}=0.3$ based on 3GPP Release 11 \cite{hoymann2013lean}. The biasing factor for the small BSs tier is $B=30$. The available air interface bandwidth for macro cells resource allocation is $W_{1}=2$ MHz, and for small cells resource allocation is $W_{2}=8$ MHz. We assume that the density of MBSs is $\lambda_{1}=2$ BS/km$^{2}$ and the density of mobile users is $\lambda^{u}=50$ users/km$^{2}$.   The path loss exponent is $\alpha_{1}=\alpha_{2}=4$.

{\color{black} \subsection{Model Validation}\label{Model_Validation}

\begin{figure*}[t!]
    \begin{center}
    \begin{subfigure}[t]{0.5\textwidth}
       \scalebox{0.36}[0.36]{\includegraphics{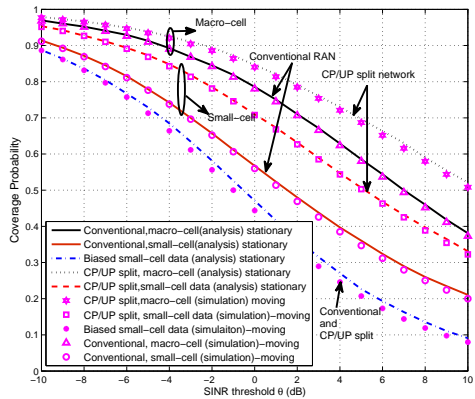}}\caption{ Conventional association \& CP/UP split data association.}
\label{coverage_probability_moving}
    \end{subfigure}%
    ~
    \begin{subfigure}[t]{0.5\textwidth}
       \scalebox{0.36}[0.36]{\includegraphics{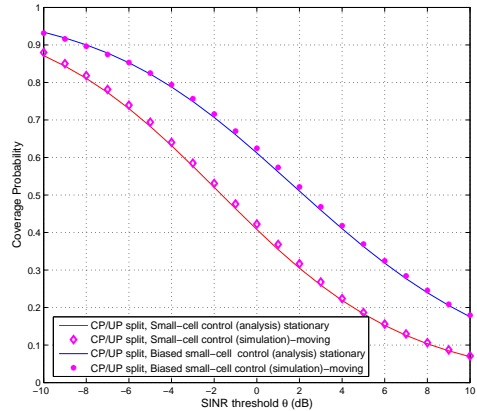}}\caption{ CP/UP split control association.}
\label{coverage_probability_moving_22}
    \end{subfigure}%
    ~
    \caption{Coverage probability as a function of the {\rm SINR} threshold $\theta$ for stationary users (analysis) and mobile users (simulation).}\label{validation}
\end{center}
\end{figure*}

 In each simulation run, the network is realized in $ 90\times90 \text{km}^{2}$ via two independent homogenous PPPs with densities $\lambda_1$ and $\lambda_2$. A test user is then generated at the origin and moves along five consecutive straight trajectories, each with a random length (Rayleigh distributed with parameter $1/\sqrt{2\pi\lambda_{1}}$) and random angle (uniformly distributed on $[0,2\pi]$). The trajectories are then partitioned with a 100-point resolution and the user's association and SINR are recorded at each point. The association type is determined based on Eq. \eqref{sets}. According to the association, the SINR value at each point of the test user trajectory is saved in one of the eight cumulative vectors corresponding to the 8 link types listed in Fig.~\ref{split_and_no_split}. Then, the above process is repeated 1000 times. The empirical CCDF of the values recorded in the eight cumulative vectors are then compared to the respective CCDF in Eq. \eqref{simulation_snir_1} to \eqref{simulation_snir_2}.

 Fig.~\ref{validation} plots the SINR CCDF obtained from the analysis (for stationary users) and the simulation (for mobile users). The figure shows that the analysis (for stationary users) closely captures the simulation result (for mobile users), confirming the validity of the proposed model for both stationary and mobile users. While the simulation result (for mobile users) considers the spatial correlation between SINR values across users' trajectories, the close match between the analysis and simulation result can be explained by the rapid spatial decay of the spatial correlation between the interference signal and the distance \cite{spatial_martin}. Hence, we can deduce that averaging over all locations in all network configurations closely captures the averaging over all trajectories in all network configurations. Fig.~\ref{coverage_probability_moving} shows that the CP/UP split architecture offers higher coverage probability, for tier-1 and tier-2 users data links, than the conventional RAN architecture due to the absence of cross-tier interference. Fig. \ref{coverage_probability_moving_22} shows that SINR coverage probability for the control signaling of the biased SBS users is better than that of the unbiased SBSs users. This is because the biased SBSs users are closer on average to the MBSs than the unbiased SBSs users.




\subsection{Handover Rate and Throughput}

Fig. \ref{simulation_analysis} visualizes the handover rates per unit length of an arbitrary trajectory as a function of the density of SBS in the conventional and CP/UP split networks. The graph shows that small cell densification linearly increases the total number of handovers in the conventional network architecture. Looking into the explicit handover types, we notice that $HO_{11}^{(c)}$ (handovers between MBSs) decreases as the density of small cells increases. The reason is that the boundaries of MBSs become more populated by SBSs when $\lambda_2$ increases. Hence, a MBS-to-MBS handover is replaced by MBS-to-SBS followed by SBS-to-MBS handover and possibly several SBS-to-SBS handovers inbetween. Also, $HO_{22}^{(c)}$ linearly increases and  $HO_{12}^{(c)} =  HO_{21}^{(c)}$  saturates. Hence, with high SBS densities, the handover rate is dominated by $HO_{22}^{(c)}$, which motivates the anchoring solution via CP/UP splitting to reduce the handover delay. As shown in the graph, in the CP/UP split network architecture, the inter-anchor handover rate is kept constant due to the constant density of the MBSs. However, the intra-anchor handover rate increases linearly with $\lambda_2$.

\begin{figure}[!t]
\begin{center}
\scalebox{0.35}[0.35]{\includegraphics{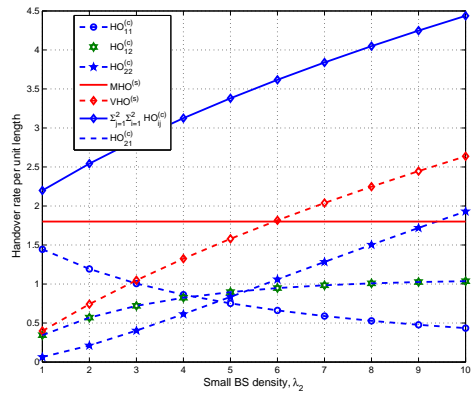}}
\end{center}
\caption{Handover rate per unit length}
\vspace{-.2cm}
 \label{simulation_analysis}
\end{figure}


\begin{figure*}[t!]
    \begin{center}
    \begin{subfigure}[t]{0.45\textwidth}
       \scalebox{0.36}[0.36]{\includegraphics{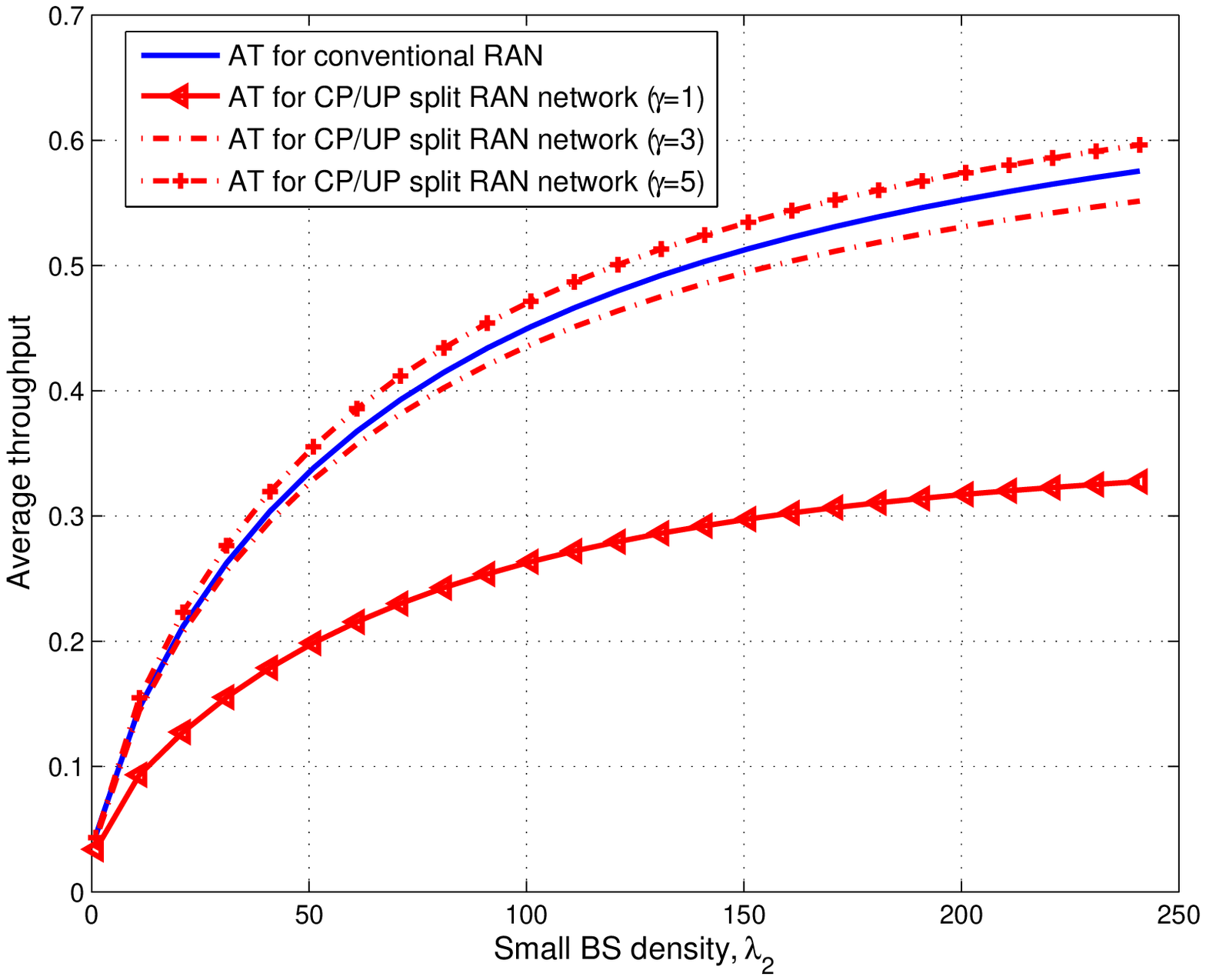}}\caption{ Stationary $\mathcal{V}=0 \text{ } km/h$.}
\label{average_rate_three_cases}
    \end{subfigure}%
    ~
    \begin{subfigure}[t]{0.45\textwidth}
       \scalebox{0.36}[0.36]{\includegraphics{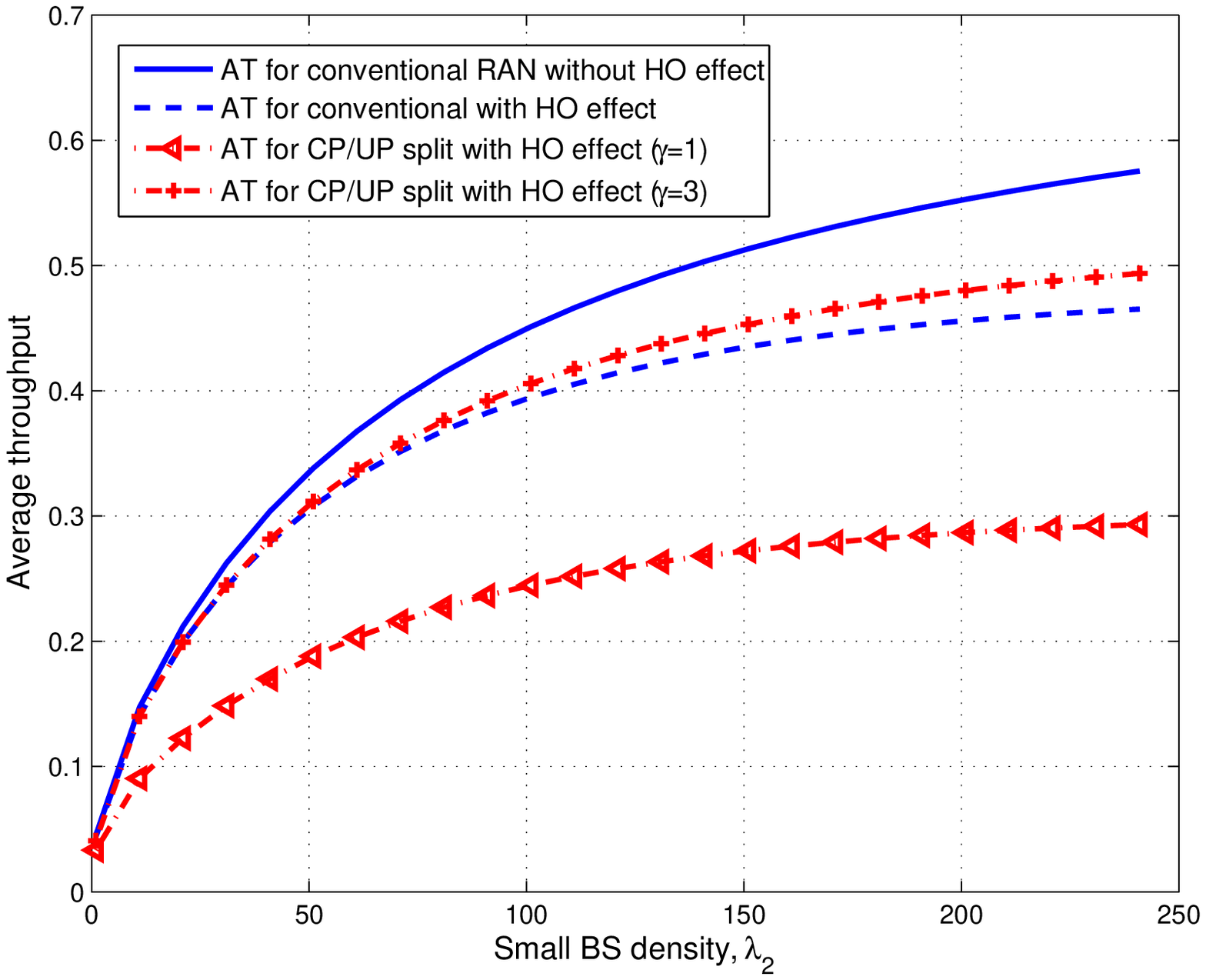}}\caption{Low speed $\mathcal{V}=50 \text{ } km/h$.}
\label{Effect_ATPU}
    \end{subfigure}%
    ~

    \begin{subfigure}[t]{0.45\textwidth}
       \scalebox{0.36}[0.36]{\includegraphics{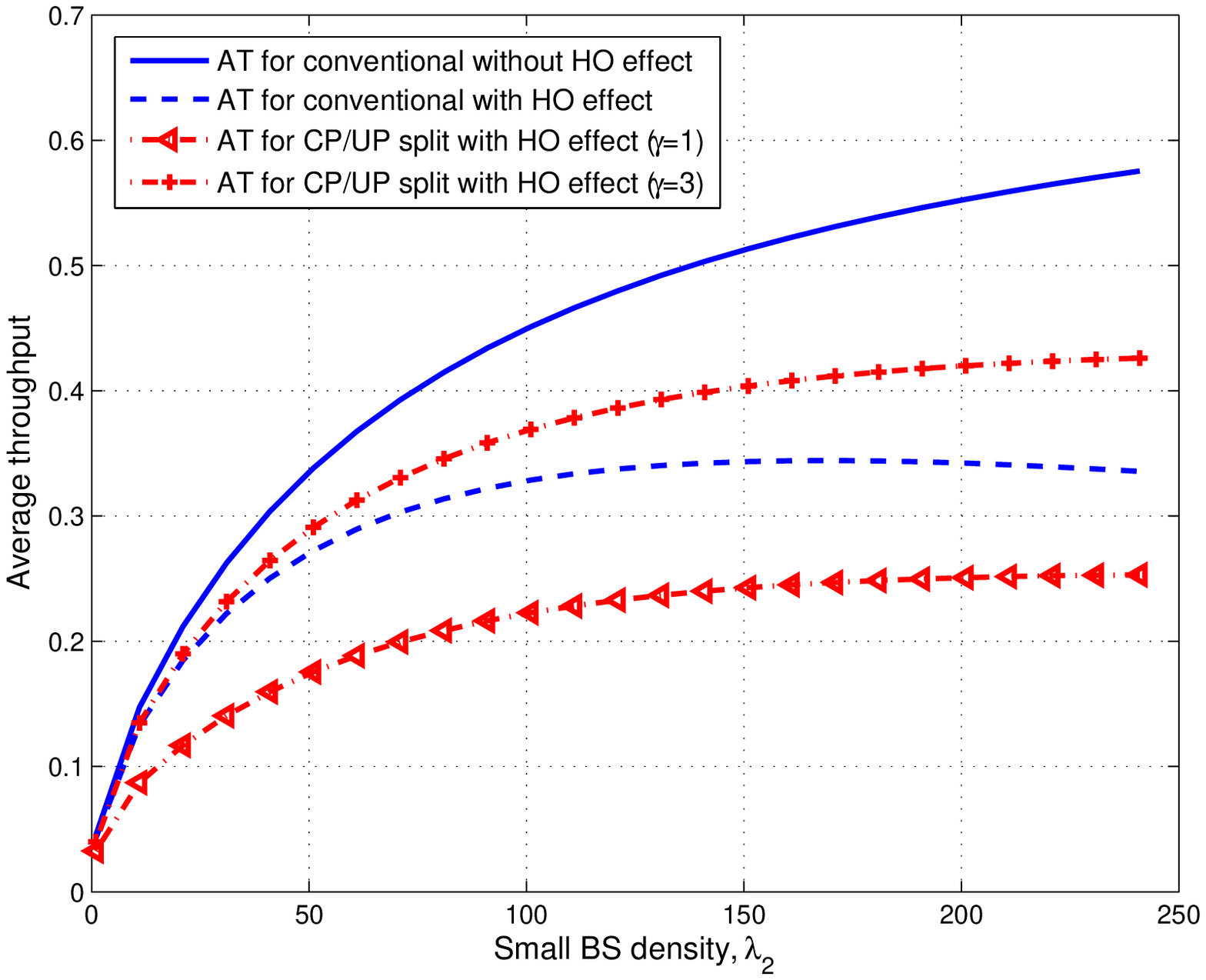}}\caption{Medium speed $\mathcal{V}=108 \text{ } km/h$.}
\label{Effect_ATPU_2}
    \end{subfigure}
        ~
    \begin{subfigure}[t]{0.45\textwidth}
       \scalebox{0.36}[0.36]{\includegraphics{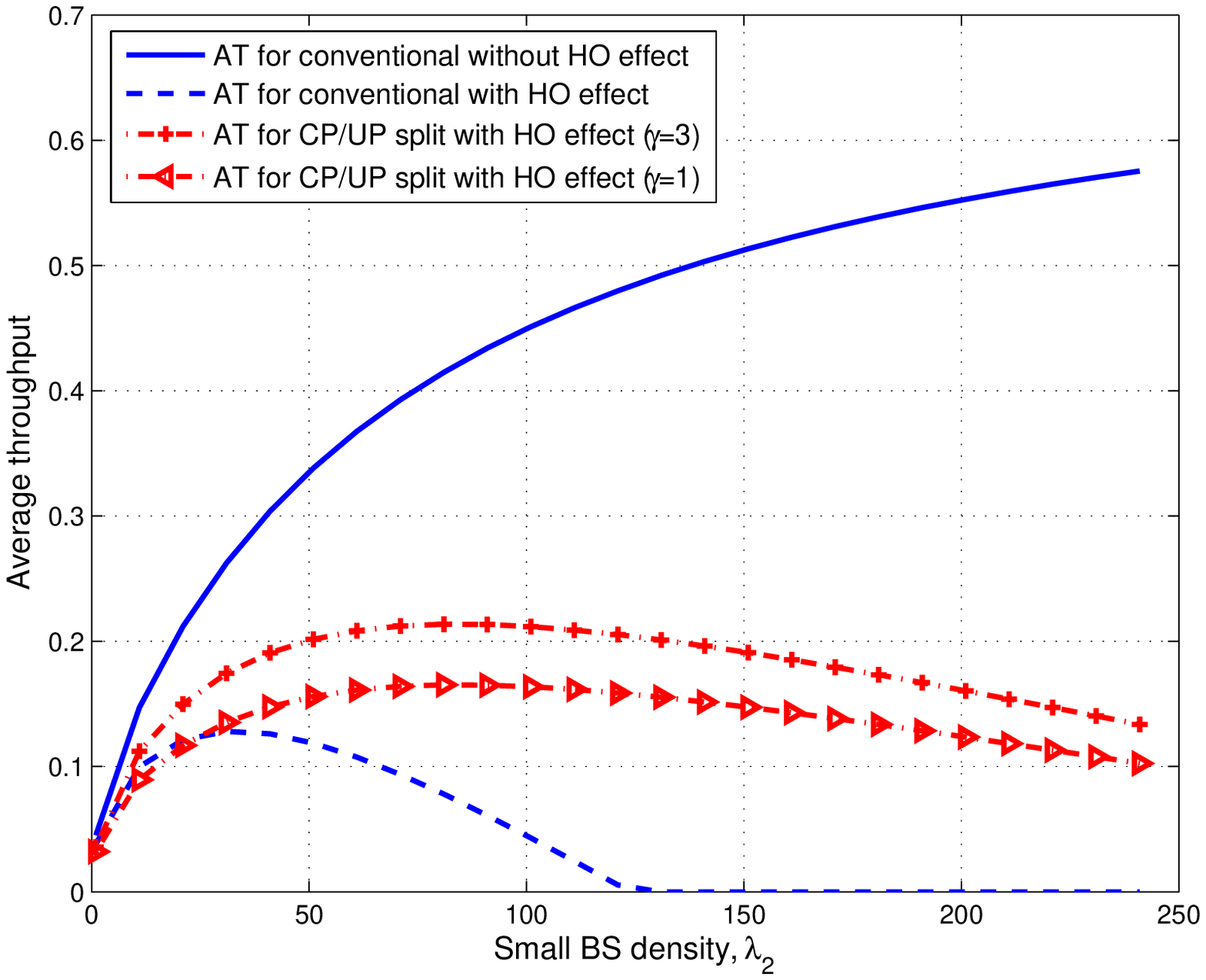}}\caption{High speed $\mathcal{V}=360 \text{ } km/h$.}
\label{Effect_ATPU_3}
    \end{subfigure}
    \caption{Average throughput with and without handover cost for mobile user with different velocities for $\gamma \in\{1, 3, 5\}$ and $\mathcal{X}=\mathcal{Z}=0$.}\label{E_ATPU}
\end{center}
\end{figure*}

The next set of simulation results show the effect of mobility, control signaling reduction factor $\gamma$, the availability of X2 interface between BSs, and SBS density on the average user throughput. Unless otherwise stated, we assume that $d^{(c)}=0.7$ seconds and that $d^{(s)}_{v}=\tilde{d}^{(c)}=\tilde{d}_m^{(s)}=0.5d^{(c)}$~\cite{mahmoodiusing}.  Fig.~\ref{E_ATPU} shows the effect of  the handover delay on the average user throughput in the conventional and CP/UP split architectures for different mobility profiles: (a) stationary $\mathcal{V}=0\text{ } km/h$, (b) low velocity $\mathcal{V}=50\text{ } km/h$ (e.g., driving in the city), (c) medium velocity $\mathcal{V}=108\text{ } km/h$ (e.g., traveling on highways or in monorails in city downtowns), and (d) high velocity $\mathcal{V}=360\text{ } km/h$. (e.g., traveling on a high speed train {\color{black}such as Shinkansen when passing through downtown Tokyo, Japan}). 

In the case of stationary users, Fig.~\ref{average_rate_three_cases} shows that a high control reduction factor $\gamma$ is required for the CP/UP split architecture to achieve an equivalent average throughout to the conventional network architecture. This result can be interpreted by the poor control rate provided by MBSs to the unbiased phantom cell users when compared to the rate they get from the SBSs (cf. Fig. \ref{coverage_probability_moving_22}). Hence, offloading the control signaling to the MBSs requires a high control reduction factor to compensate for such rate loss. Note that the per user rate for unbiased users of the SBSs increases with $\lambda_2$, and hence, offloading control to the MBSs incurs higher rate loss. Consequently, the CP/UP split architecture is not beneficial to networks with stationary users unless a high control reduction factor can be achieved.




For mobile users, Figs. \ref{Effect_ATPU}, \ref{Effect_ATPU_2} and \ref{Effect_ATPU_3} show that the CP/UP split architecture is beneficial especially for high speeds and $\mathcal{X}=\mathcal{Z}=0$; i.e., there is no X2 interface handovers on both conventional and CP/UP split architectures. Note that we show the ideal case; i.e, AT for stationary users, to clearly visualize the effect of mobility on the average throughput. Figs. \ref{Effect_ATPU} and \ref{Effect_ATPU_2} show that a control reduction factor of $\gamma =3$ is sufficient for the CP/UP split architecture to outperform the conventional network architecture when users move at low or medium speeds.  When the mobility speed is high (Fig. \ref{Effect_ATPU_3}), the CP/UP split network outperforms the conventional network even without control reduction (i.e., $\gamma=1$).  More importantly, only the CP/UP split network can support users moving at such high speeds while the conventional network cannot.

It is important to note that Fig. \ref{E_ATPU} is plotted for $d^{(s)}_{v}=0.5d^{(c)}$. The CP/UP split architecture can offer even higher throughput gains if the intra-anchor delay is lowered.  Fig. \ref{Effect_ATPU_4} shows the additional gain that the CP/UP split network offers when $d^{(s)}_{v}=0.3d^{(c)}$ versus the case where $d^{(s)}_{v}=0.5d^{(c)}$.  The graph demonstrates the importance of lowering the intra-anchor delay and minimizing the involvement of the core network during handovers. Therefore, the CP/UP split architecture can be used to increase the throughput of mobile users in dense small cell deployments by making the MBSs act as handover anchors instead of involving the core network in handovers.

 Fig. \ref{AT_XZ} shows the effect of the direct X2 interface availability between BSs on the average throughput in the conventional and CP/UP splitting architectures. The figure shows that the X2 interfaces have more prominent effect on the conventional network architecture because it reduces the delay for all handover types. On the other hand, the X2 interference does not have a noticeable effect on in CP/UP split architecture because it only reduces the inter-anchor handover delay, which is considered a rare handover event.  The figure also shows that the relative performance gains between the conventional and CP/UP splitting architectures highly depends on the X2 interface availability. Particularly, there are critical points at which the conventional network with sufficient X2 interface deployment outperforms the CP/UP split architecture in terms of average throughput. Such critical points are depicted in Fig. \ref{AT_XZ} at  $\mathcal{X}=\mathcal{Z} = 0.5$, $\mathcal{X}=\mathcal{Z}= 0.8$, and $\mathcal{X}=\mathcal{Z}= 0.95$ for $\mathcal{V} = 50 \text{ } km/h$, $\mathcal{V} = 108 \text{ } km/h$. and $\mathcal{V} = 360 \text{ } km/h$, respectively.

\begin{figure}[!t]
\begin{center}
\scalebox{0.37}[0.37]{\includegraphics{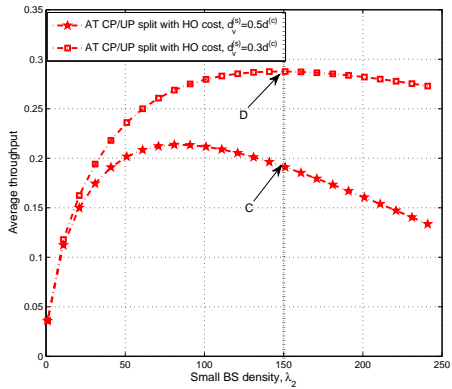}}
\end{center}
\caption{Average throughput with handover cost and intra-anchor handover delay values $d^{(s)}_{v}=0.5d^{(c)}$ and $d^{(s)}_{v}=0.3d^{(c)}$ ($\mathcal{V} = 360 \text{ } km/h$, $\gamma=3$, and $\mathcal{X}=\mathcal{Z}=0$).}
\vspace{-.2cm}
\label{Effect_ATPU_4}
\end{figure}


\begin{figure}[!h]
\begin{center}
\scalebox{0.35}[0.35]{\includegraphics{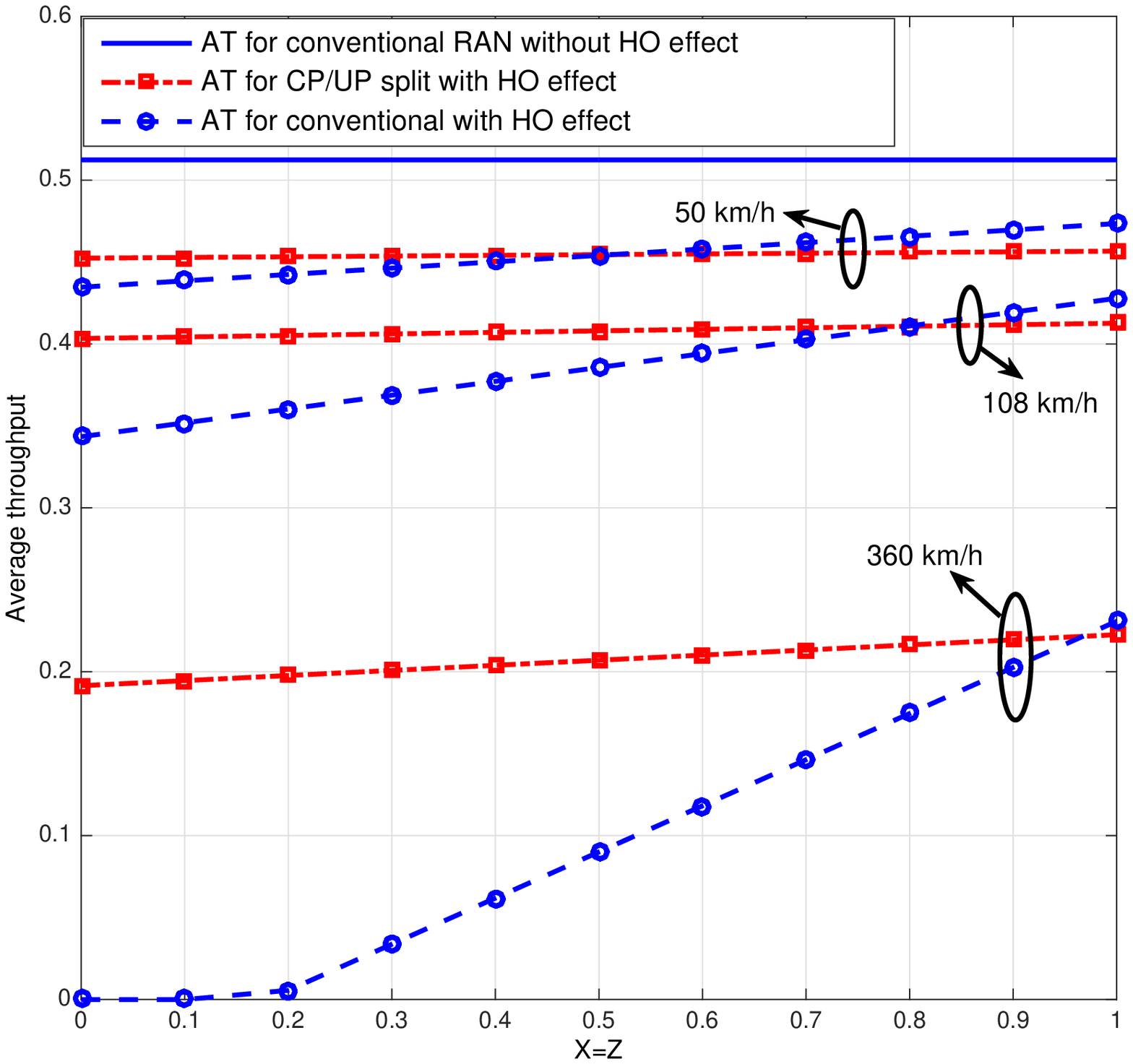}}
\end{center}
\caption{Average throughput with and without handover cost when $\gamma=3$, $\lambda_{2}=150$ BS/km$^{2}$, and $d^{(s)}_{v}=\tilde{d}^{(c)}=\tilde{d}_m^{(s)}=0.5d^{(c)}$. }
\vspace{-.2cm}
\label{AT_XZ}
\end{figure}


\subsection{Feasibility of the CP/UP Split Architecture }

\begin{figure}[!t]
\begin{center}
\scalebox{0.34}[0.34]{\includegraphics{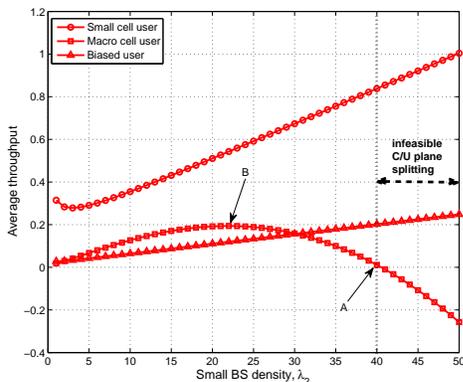}}
\end{center}
\vspace{-20pt}
\caption{Average throughput of small cell, macro cell and  biased users as a function of small cell density  ($\mathcal{V}=0$, $\gamma$ = 3 and $B=30$)}
\vspace{-.2cm}
\label{weighted_average}
\end{figure}

To examine the feasibility of the CP/UP split architecture as stated in Corollary~\ref{col}, we plot Fig.~\ref{weighted_average}, which shows the average throughputs for all types of users as functions of the SBS density, assuming  $\mathcal{V}=0$, $\gamma=3$, and $B=30$. Note that we assume saturation conditions such that newly added SBSs always have users to serve.  The graph shows the breaking point (point A in Fig.~\ref{weighted_average}) at which the MBSs fail to provide the control signaling required by phantom cell users. Point A is the point at which the inequality \eqref{condition_col} is violated. Note that the CP/UP split architecture can still be made feasible by allocating more spectrum to the MBSs or enhancing the control reduction factor $\gamma$ as shown in Corollary~\ref{col}.

\subsection{Design Insights}
From the above numerical results, several design insights can be drawn for the  CP/UP split network architecture. First,  the CP/UP architecture becomes more appealing for higher mobility profiles when the availability of direct X2 interface between the BSs is low, in which the control signaling reduction factor plays a key role in the throughput gains when compared to the conventional architecture. The amount of delay reduction provided by the intra-anchor handover also has a significant impact on the throughput gains provided by the CP/UP split networks. For instance, Fig. \ref{Effect_ATPU_4} shows a 60\% throughput improvement when the intra-anchor handover delay  $d^{(s)}_{v}$ is reduced from $0.5d^{(c)}$ (point C in Fig. \ref{Effect_ATPU_4}) to $0.3d^{(c)}$ (point D in Fig. \ref{Effect_ATPU_4}).

Cellular operators can solve the excess handover problem, which is coupled with network densification, either by deploying more X2 interfaces between adjacent BSs or applying CP/UP splitting. While the former reduces the handover delay only, the latter reduces both the handover delay as well as the signaling overhead. {\color{black} Note that at higher SBSs densities and/or control reduction factors, the conventional network may not achieve the CP/UP split throughput even with $100\%$ X2 deployment}. Consequently, the CP/UP split architecture is more appealing in ultra dense environments with high mobility profiles.

Another noteworthy insight is that there is a tradeoff between traffic offloading via biasing and control offloading via the CP/UP architecture on macrocell users rate. As shown in Fig. \ref{weighted_average}, there is a turning point for the average throughput of macro cell users at $\lambda_{2} = 22$ (point B in Fig. \ref{weighted_average}).  For the given network configuration, prior to point B, the positive impact of offloading users traffic to phantom cells (i.e., decreasing $\mathcal{N}_1$)  dominates the negative impact of offloading control signaling from the phantom cells to MBSs. Then, the situation is reversed after point B and the negative impact of the control burden dominates the positive impact of traffic offloading until the infeasibility  point is reached (point A). Such tradeoff can be used to optimize the biasing factor such that the macrocell users rate is maximized.   
\section{Conclusion}
\label{conc_future}

We present a novel mobility-aware analytical paradigm for CP/UP split RAN network architecture with flexible user association. We derive tractable mathematical expressions for coverage probability and user throughput, which can be reduced to closed-form expressions in special cases.  The analysis takes into account the control signaling overhead, spectrum allocation schemes, interference coordination via almost blank subframes, the availability of X2 interface between BSs, and delay incurred by handovers.   We then use the developed model to quantify the performance gains offered by the CP/UP split RAN network architecture.  In particular, we quantify the impacts of handover delay and mobility speed on the user throughput.  We also examine the effects of small cell density, control reduction factor, and core network delay on the user throughput.

The developed model shows that the handovers impose a fundamental limit on the performance gain that can be obtained via densification. In moderate and high mobility profiles, the CP/UP split network architecture offers a potential solution to reduce the control overload and mitigate the handover delay, and hence, improve the network densification gain in networks with low availability of direct X2 interface
between BSs. It is also crucial to know the optimal small cell density for a specific network configuration in order to balance the trade-off between the offloading of user data traffic away from MBSs and control signaling towards MBSs in order to maximize the network throughput.


\appendices

\section{Proof of Lemma~\ref{rate_lemma}} \label{rate_lemma_proof}

The phantom cells dedicate $\eta$ and $(1-\eta)$ fraction of the time to serve biased and non-biased users, respectively.  Consider a time interval of $t$ seconds. Then the numbers of data bits sent by each phantom BS to non-biased and biased users are $ (1-\eta) t \mathcal{R}_{d2}$ bits and $ \eta t \mathcal{R}_{dB}$ bits, respectively. On average, there are $\frac{\lambda_2}{\lambda_1}$ phantom BSs per MBS, and hence, the MBS should be able to convey control signaling amounts of  $ \frac{\lambda_2}{\lambda_1} \frac{\mu_{C} (1-\eta) t \mathcal{R}_{d2}}{\gamma}$ bits and $ \frac{\lambda_2}{\lambda_1} \frac{\mu_{C} \eta t \mathcal{R}_{dB}}{\gamma}$ bits to non-biased and biased users, respectively, during time interval $t$. However, the MBS sends the control bits with the rates of $\mathcal{R}_{c2}$ and $\mathcal{R}_{cB}$  for non-biased and biased phantom cell users, respectively. Hence, the amount of time required to send the control signaling by the MBS is $\frac{\lambda_2}{\lambda_1} \frac{\mu_{C} (1-\eta) t \mathcal{R}_{d2}}{\gamma \mathcal{R}_{c2}}$ seconds  and $\frac{\lambda_2}{\lambda_1} \frac{\mu_{C} \eta  t \mathcal{R}_{dB}}{\gamma \mathcal{R}_{cB}}$ seconds for non-biased and biased phantom cell users, respectively. Consequently, the remaining time for the MBS to serve macrocell users is $\left(t- \frac{\lambda_2}{\lambda_1} \frac{\mu_{C} (1-\eta) t \mathcal{R}_{d2}}{\gamma \mathcal{R}_{c2}} - \frac{\lambda_2}{\lambda_1} \frac{\mu_{C} \eta t \mathcal{R}_{dB}}{\gamma \mathcal{R}_{cB}} \right)$ seconds. Hence, the average number of bits the MBS conveys to macrocell users during time interval $t$ is $\mathcal{R}^{(s)}_1 \left(t- \frac{\lambda_2}{\lambda_1} \frac{\mu_{C} (1-\eta) t \mathcal{R}_{d2}}{\gamma \mathcal{R}_{c2}} - \frac{\lambda_2}{\lambda_1} \frac{\mu_{C} \eta t \mathcal{R}_{dB}}{\gamma \mathcal{R}_{cB}}\right) $ bits. Dividing the above expression by $t$, we obtain the average rate at which data is delivered to macrocell users as  $\mathcal{R}^{(s)}_1  \left(1- \frac{\lambda_2}{\lambda_1} \frac{\mu_{C} (1-\eta) \mathcal{R}_{d2}}{\gamma \mathcal{R}_{c2}} - \frac{\lambda_2}{\lambda_1} \frac{\mu_{C} \eta \mathcal{R}_{dB}}{\gamma \mathcal{R}_{cB}}\right) $. Then \eqref{macro_rate_lemma} is obtained by replacing $(1-\eta) \mathcal{R}_{d2}$ by $\mathcal{T}^{(s)}_2$, replacing $\eta \mathcal{R}_{dB}$ by $\mathcal{T}^{(s)}_B$ and multiplying the above expression by ($1-\mu_{C}$).

\normalsize

%

\section{Proof of Lemma~\ref{lem_distances}} \label{distances}
From the independence of the PPPs of the macro and phantom BSs , the joint pdf of the distances between a generic user and his nearest phantom BS and nearest MBS is given by $f_{r_1,r_2}(x,y)= 4 \pi^2 x y\lambda_{1}\lambda_{2} e^{-\pi (\lambda_1 x^2 +\lambda_2 y^2)}$, $x,y >0$. The control link distributions are given by
\footnotesize
\begin{align}
\mathbb{P}\left\{ r_1 < x \vert \mathcal{\Large\emph{u}}_2\right\} &= \frac{\mathbb{P}\left\{r_1 < x , \mathcal{\Large\emph{u}}_2\right\} }{\mathbb{P}\left\{ \mathcal{\Large\emph{u}}_2\right\} } = \frac{\mathbb{P}\left\{r_1 < x , {P_1 r_1^{-\alpha_1}}<{P_2 r_2^{-\alpha_1}}\right\} }{\mathbb{P}\left\{ {P_1 r_1^{-\alpha_1}}<{P_2 r_2^{-\alpha_1}}\right\} } \notag \\
&= \frac{\mathbb{P}\left\{r_1 < x , r_2< \left(\frac{P_2}{P_1}\right)^\frac{1}{\alpha_2} r_1^\frac{\alpha_1}{\alpha_2}\right\} }{\mathcal{A}_2 }.
\end{align}
\normalsize
Hence, the pdf of $R_{2c}$ is given by

\small
\begin{align}
f_{R_{c2}}(x) &= \frac{1}{\mathcal{A}_2}\frac{{\rm d}\mathbb{P}\left\{r_1 < x , r_2< \left(\frac{P_2}{P_1}\right)^\frac{1}{\alpha_2} r_1^\frac{\alpha_1}{\alpha_2}\right\} }{{\rm d}x}
\notag \\
&= \frac{1}{\mathcal{A}_2}\int_{0}^{\left(\frac{P_2}{P_1}\right){^\frac{1}{\alpha_2}}  x^{\frac{\alpha_1}{\alpha_2}}} f_{r_1,r_2}(x,y) dy.
\end{align}
\normalsize
Similarly, the pdf of $R_{cB}$ is derived as follows.
\small
\begin{align}
& \!\!\!\!\!\!\!\!f_{R_{cB}}(x) = \frac{1}{\mathcal{A}_B}\int_{\left(\frac{P_2}{P_1}\right){^\frac{1}{\alpha_2}} x^{\frac{\alpha_1}{\alpha_2}}}^{\left(\frac{B P_2}{P_1}\right){^\frac{1}{\alpha_2}}  x^{\frac{\alpha_1}{\alpha_2}}} f_{r_1,r_2}(x,y) dy.
\end{align}
\normalsize




\section{Proof of Lemma \ref{average_trans_rate_three_types}}
\label{Average transmission rate_prof}

The coverage probability, which is the {\em ccdf} of the {\rm SINR}, can be expressed in terms of the Laplace transform (LT) of the aggregate interference. Using the general SINR model in \eqref{gen_sinr}, the coverage probability is given by

\footnotesize
\begin{align}
\mathbb{P}[{\rm SINR}>\theta]&=\mathbb{P}\left[\frac {P_{BS}H {r_o}^{-\alpha}}{\mathcal{I}_{agg}+\sigma^2}>\theta\right]\notag\\
&\stackrel{(a)}{=}\int_{0}^\infty\mbox{exp}\left(-\frac{\sigma^{2} \theta r_{o}^{\alpha}}{P_{BS}} \right)\mathcal{L}_{\mathcal{I}_{agg}}(\theta \frac{r_{o}^{\alpha}}{P_{BS}})f_{r_o}(r),\label{atr_1}
\end{align}
\normalsize
where (a) follows from the exponential distribution of $H$ and the definition of the LT \cite{elsawy2013survey, dhillon2012modeling}, and the parameters in \eqref{atr_1} can be obtained for each user case from Table~\ref{SINR_Table}. In the case of shared spectrum access, $\mathcal{I}_{agg}$ is the superposition of two independent interferences from the two tiers $1$ and $2$, and hence, can be decomposed to the multiplications of the LTs of the interferences from each tier as

\begin{align}
\mathcal{L}_{\mathcal{I}_{agg}}(s) = \mathcal{L}_{\mathcal{I}_{1}}(s)  \mathcal{L}_{\mathcal{I}_{2}}(s).
\label{ltss}
\end{align}

\noindent In the other cases, the aggregate interference is simply single-tier interference. The LT of the interference for a given network tier $k \in \{1,2\}$ is calculated as:

\footnotesize
\begin{align}
\mathcal{L}_{\mathcal{I}_k}(s)&=\mathbb{E}\left[e^{-sP_{k}\sum\limits_{x\in\mathbf{\Phi_{k}}\backslash b_o}H_{x} x^{-\alpha_{k}}}\right] =\mathbb{E}\left[\prod_{x\in\mathbf{\Phi_{1}}\backslash b_o} e^{-sP_{k}H_{x} x^{-\alpha_{k}}}\right]\notag\\
&\stackrel{(b)}{=}\mbox{exp}\Bigg\{-2\pi \lambda_{k}\int_{\|b_o\|}^\infty \frac{sP_{k} x}{{x^{\alpha_{k}}}+sP_{k}} \mbox{d}x\Bigg\}.
\label{laplace_i1x}
\end{align}
\normalsize

\noindent where (b) follows by from the probability generating functional of the PPP and the i.i.d. exponential distribution of $H_x$, and $b_o$ is the location of the serving BS determined by the employed association criterion. The lemma is obtained by calculating the LT of the aggregate interference affecting the test user according to  Table~\ref{SINR_Table} using \eqref{ltss} and \eqref{laplace_i1x}, in which the  the location of the serving BS $b_o$ is obtained via the association criterion given in \eqref{sets}. Then, by substituting the LT of the interference in \eqref{atr_1} and integrating over the appropriate link distance given in Table~\ref{SINR_Table}, we obtain the coverage probabilities.

\ifCLASSOPTIONcaptionsoff
  \newpage
\fi



%

\bibliographystyle{IEEEtran}
\bibliography{phantom}
\end{document}